\def\BibTeX{{\rm B\kern-.05em{\sc i\kern-.025em b}\kern-.08em
    T\kern-.1667em\lower.7ex\hbox{E}\kern-.125emX}}
\newcommand\fs@norules{\def\@fs@cfont{\bfseries}\let\@fs@capt\floatc@ruled
  \def\@fs@pre{}%
  \def\@fs@post{}%
  \def\@fs@mid{\kern3pt}%
  \let\@fs@iftopcapt\iftrue}
\newtheorem{theorem}{Theorem}
\newtheorem{proposition}{Proposition}
\newtheorem{definition}{Definition}
\newtheorem{lemma}{Lemma}
\newtheorem{remark}{Remark}
\newcommand{\ip}[2]{\ensuremath\left\langle{#1}, {#2}\right\rangle}
\newcommand{\dom}{\ensuremath\mathrm{dom}\,}
\newcommand{\norm}[1]{\left \lVert #1 \right \rVert}
\begin{document}

\renewcommand{\arraystretch}{1.4} % Adjusts the row height
\setlength{\arraycolsep}{0.6em} % Adjusts the column spacing

\title{Operator-Splitting Methods for Neuromorphic Circuit Simulation}
\author{Amir Shahhosseini, Thomas Chaffey, and Rodolphe Sepulchre \IEEEmembership{Fellow, IEEE}
\thanks{This work was supported by the European Research Council under
the Advanced ERC Grant Agreement SpikyControl n. 101054323. The
work of Thomas Chaffey was supported by Pembroke College, Cambridge.}
\thanks{Amir Shahhosseini is with the Department
of Electrical Engineering (STADIUS), KU Leuven, B-3001 Leuven, Belgium (e-mail: amir.shahhosseini@kuleuven.be).}
\thanks{T. Chaffey is with the School of Electrical and Computer Engineering, University of Sydney, Australia (e-mail: thomas.chaffey@sydney.edu.au).}
\thanks{Rodolphe Sepulchre is with both the Department of Engineering, University
of Cambridge, CB2 1PZ Cambridge, U.K., and also with the Department
of Electrical Engineering (STADIUS), KU Leuven, B-3001 Leuven, Belgium (e-mail: rodolphe.sepulchre@kuleuven.be).}}

\maketitle
\begin{abstract}
A novel splitting algorithm is proposed for the numerical simulation of neuromorphic circuits. The algorithm is grounded in the operator-theoretic concept of monotonicity, which bears both physical and algorithmic significance. The splitting exploits this correspondence to translate the circuit architecture into the algorithmic architecture. The paper illustrates the many advantages of the proposed operator-theoretic framework over conventional numerical integration for the simulation of multiscale hierarchical events that characterize neuromorphic behaviors. 
\end{abstract}

\begin{IEEEkeywords}
Neuromorphic Systems, Circuit Theory, Convex Optimization, Nonlinear Dynamics.
\end{IEEEkeywords}

\section{Introduction}

The numerical simulation of biophysical neuronal circuits, modelled as interconnections of memristive elements, has a long history and can be traced back to the seminal modelling work of Hodgkin and Huxley \cite{hodgkin1952quantitative}. The interest in this long-standing question is undergoing a strong resurgence in recent years due to the development of neuromorphic engineering applications \cite{gallego2020event,burr2017neuromorphic}, neuromorphic computing \cite{khacef2023spike,young2019review}, spiking neural networks \cite{taherkhani2020review}, and neuromorphic control \cite{ribar2021neuromorphic}. In all those disciplines, numerical simulation is a necessary complement to analysis and design. Nevertheless, simulating neuronal models has always been a challenge: the circuit differential equations are nonlinear and multiscale, resulting in {\textit{stiff}} and {\textit{large-scale}} ordinary differential equations (ODEs). Moreover, the models usually depend on numerous poorly known and highly variable parameters. The mere sensitivity analysis of those models is known to be a delicate question, even though sensitivity analysis is often the very motivation to develop simulation models \cite{drion2011modeling}.

Motivated by the analysis and design of such neuromorphic circuits, this paper explores the potential of operator-theoretic methods for circuit simulation, as an alternative to the standard numerical integration (NI) methods. Operator-theoretic methods and the related splitting algorithms have gained popularity in systems and control in recent decades for a number of reasons: first, they can handle non-smooth and/or multivalued problems, which are hard to solve with numerical integration, see e.g. \cite{brogliato2016nonsmooth}. Second, they are rooted in circuit theory and, in particular, in the concept of monotonicity \cite{minty1960monotone}, making them suitable to acknowledge physical properties such as passivity \cite{camlibel2016linear} and to exploit the circuit topology in the splitting architecture of the algorithms \cite{chaffey2023monotone,chaffey2023circuit}. Finally, they have become the backbone of large-scale optimization algorithms due to the close relationship between monotonicity and convexity \cite{ryu2022large,rockafellar2015convex}.

Those reasons have motivated the recent dissertation \cite{chaffey2022input} and the development of efficient numerical algorithms for the simulation and analysis of electrical circuits composed of monotone elements \cite{chaffey2023monotone,chaffey2023circuit}. As a continuation of \cite{chaffey2022input}, the objective of the present paper is to demonstrate the potential of this framework for neuromorphic circuits. Our angle of attack is to exploit the mixed architecture of neuromorphic systems, which create excitable and spiking behaviors by systematically interlacing elements of positive and negative conductance \cite{sepulchre2022spiking}.  From an operator-theoretic viewpoint, solving the resulting circuit equations amounts to finding the zero of a \emph{difference} of monotone operators. Moving from the solution of a monotone operator to a difference of monotone operators is algorithmically analogous to optimizing a difference of convex functions rather than a convex function, hence a fundamental jump in algorithmic complexity \cite{yuille2003concave}. A key message of the present paper, however, is that this departure is under tight control when solving a neuromorphic system: the number of  ``negative conductance” elements of a neuromorphic circuit is normally much lower than the number of  ``positive conductance” elements \cite{ren2011sodium}, \cite[section II-E]{sepulchre2022spiking}, Furthermore, the splitting between monotone and anti-monotone is physical and interpretable rather than arbitrary. A key appeal of operator-theoretic methods for the simulation of neuromorphic circuits lies indeed in the significance of monotonicity as a both physical and algorithmic property.

The key contributions of this paper are as follows: (i) It is shown how the architecture of a general neuromorphic circuit leads to a systematic decomposition of the operator to difference of monotone suboperators, completely determined by the circuit elements and by the circuit topology; (ii) a splitting algorithm to solve the resulting equations is designed and includes convergence analysis; and (iii) the salient properties of the proposed approach relative to conventional numerical integration are illustrated. We demonstrate the potential of operator-theoretic methods for (i) continuation methods (of key significance for parametric sensitivity analysis and neuromodulation studies), (ii) modular circuit simulation through hierarchical simulation of motifs at increasing scales. This permits the coarse-grained algorithms to accurately detect events of commensurate temporal resolution.

The rest of the paper is organised as follows: Section 2 presents the architecture of a general neuromorphic circuit and derives the corresponding splitting of the operator to be solved. This provides an operator-theoretic representation of spiking networks. Section 3 presents the novel splitting algorithm and several technical details to enable an efficient algorithmic implementation. Section 4 provides three illustrations of the method and discusses its advantages relative to conventional numerical integration. Section 5 concludes with an overview of the paper and directions for future research.

\section{Modeling Neuromorphic Systems}

\subsection{Mathematical Preliminaries}

It is beneficial to begin with a brief review of the core mathematical concepts and definitions that are frequently used within the context of this article. Here, $\mathcal{H}$ denotes a Hilbert space, equipped with an inner product defined over the time axis $\mathbb{T}$, given by
\begin{equation}{\label{Eq:InnerProductDefinition}}
    \ip{u(t)}{y(t)} = \int_{\mathbb{T}} u^\top(t)y(t)dt
\end{equation}
where $u(t), y(t)$ are signals in $\mathcal{H}$.

A central space of interest is $\mathcal{L}^2_\mathbb{T}$, the Hilbert space of square-integrable signals, i.e., signals of finite energy such that $\ip{u(t)}{u(t)} < \infty$. The time domain $\mathbb{T}$ may vary depending on the problem setting.

In order to develop the distributed version of splitting algorithms, it is often necessary to \emph{lift} the problem from a space containing a single signal to a higher-dimensional space containing multiple signals. This lifted Hilbert space is defined using the direct sum:
\begin{equation}
    \boldsymbol{\mathcal{L}}^2 = \bigoplus_{i = 1}^{n} \mathcal{L}^2,
\end{equation}
where  \( \boldsymbol{\mathcal{L}}^2 \) denotes the lifted Hilbert space formed by the direct sum of $n$ copies of $\mathcal{L}^2$, with each copy corresponding to the space containing one of the $n$ signals. In the literature, this space is often referred to as a \emph{product }Hilbert space, particularly in the context of convex optimization and operator splitting methods, where problems are decomposed into finite components \cite{bauschke_convex_2011}. The relationship between lifting in convex optimization and the element extraction method of circuit analysis was studied in \cite{chaffey2023circuit}. In the finite-dimensional case, the direct sum of Hilbert spaces and their Cartesian product become isomorphic as Hilbert spaces.

\begin{definition}\label{def:Monotonicity}
    Operator $\operatorname{A}: \mathcal{L}^2 \xrightarrow{} \mathcal{L}^2
$ is {\textit{monotone}} if for all $u_1, u_2 \in \dom {\operatorname{A}}$ and $y_i = \operatorname{A}(u_i)$,
\begin{equation}\label{eq:monotonicity}
    \ip{u_1 - u_2}{y_1 - y_2}  \geq 0
\end{equation}
where this definition corresponds to incremental passivity. Operator $\operatorname{A}$ is {\textit{anti-monotone}} if the inner product of (\ref{eq:monotonicity}) is non-positive for all signals in the domain of $\operatorname{A}$. 
\end{definition}

\begin{definition}\label{def:Graph}
   The {\textit{graph}} of an operator $\operatorname{A}: \mathcal{L}^2 \xrightarrow{} \mathcal{L}^2
$ on the space $\mathcal{L}^2$ is defined as
\begin{equation}\label{eq:graph}
    \text{Gra} A = \{(u,y) | y \in \operatorname{A}(u)\} \subseteq \mathcal{L}^2 \times \mathcal{L}^2
\end{equation}
\end{definition}
and the operator $\operatorname{A}$ is \emph{maximal monotone} if it is not contained in the graph of any other monotone operator. 

\begin{definition}\label{def:Resolvent}
    The \emph{resolvent} operator $\operatorname{J}_{\alpha \operatorname{A}}: \mathcal{L}^2 \xrightarrow{} \mathcal{L}^2
 $ of the operator $\operatorname{A}: \mathcal{L}^2 \xrightarrow{} \mathcal{L}^2
 $ is defined as
 \begin{equation}\label{eq:resolvent}
     \operatorname{J}_{\alpha \operatorname{A}} = (\operatorname{Id} + \alpha \operatorname{A})^{-1}
 \end{equation}
 where $\alpha >0$ is a scalar and $\operatorname{Id}$ is the identity operator. 
 \end{definition}

 \begin{remark}\label{rmk:Resolvent-computation}
     The inversion of (\ref{eq:resolvent}) is a relational inverse and always exists \cite{chaffey2023monotone}. Nevertheless, this existence does not guarantee a simple computational procedure. Proximal methods are used to compute this resolvent \cite{parikh2014proximal} and the tractability of this computational stage depends heavily on the structure and properties of the operator $\operatorname{A}$.
 \end{remark}

 \begin{definition}{\label{def:reflected}}
    The \emph{reflected resolvent} operator (\emph{Cayley} operator) $\operatorname{R}_{\alpha \operatorname{A}}: \mathcal{L}^2 \xrightarrow{} \mathcal{L}^2
 $ of the operator $\operatorname{A}: \mathcal{L}^2 \xrightarrow{} \mathcal{L}^2
 $, is defined as
 \begin{equation}
    \operatorname{R}_{\alpha \operatorname{A}} = 2  \operatorname{J}_{\alpha \operatorname{A}} - \operatorname{Id}
 \end{equation}
 where $\operatorname{J}_{\alpha \operatorname{A}}$ is the resolvent operator. An alternative representation of the reflected resolvent is
 \begin{equation}
    \operatorname{R}_{\alpha \operatorname{A}} = (\operatorname{Id} - \operatorname{A}) (\operatorname{Id} + \operatorname{A})^{-1}
 \end{equation}
 that will prove to be useful later in this paper.
 \end{definition}

 \begin{definition}{\label{def:FPI}}
    Given an operator $\operatorname{T}: \mathcal{H} \xrightarrow{}  \mathcal{H}$ and a signal $x \in \mathcal{H}$, a fixed-point iteration, also known as {\textit{Picard}} iteration, is defined as
    \begin{equation}{\label{eq:FPI_def}}
        x^{k+1} = \operatorname{T}(x^k)
    \end{equation}
    where the superscript denotes the iteration count. There are different setups where the fixed-point iteration of (\ref{eq:FPI_def}) converges to $\text{Fix} \operatorname{T}$, the most well-known being
    \begin{enumerate}
        \item $\text{Fix} \operatorname{T}$ is non-empty and $\operatorname{T}$ is a contractive mapping. This is the result of the {\textit{Banach fixed-point Theorem}}.
        \item $\text{Fix} \operatorname{T}$ is non-empty and $\operatorname{T}$ is an averaged operator. Then, the iterations of (\ref{eq:FPI_def}) are known as the {\textit{Krasnosel'skiĭ-Mann}} iteration.
    \end{enumerate}
\end{definition}

\subsection{Model Definition}

The model chosen to represent neurons determines the level of  conformity with neurophysiology, hardware implementability and computational complexity. This work adopts the spiking neuron model from study \cite{ribar2019neuromodulation}. Figure \ref{Fig:Neuron_Architecture} illustrates the architecture of this model which offers three key advantages. First, it is highly modular, allowing components to be added or removed to adjust complexity and behavior. Second, its intuitive structure enables straightforward parameter tuning without requiring expertise in neurodynamics \cite{ribar2020synthesis}. Finally, the model can be implemented using electrical circuits, making it a pragmatic choice for building physical neuromorphic systems \cite{ribar2021neuromorphic}.

\begin{figure}[h]
    \centering
    \includegraphics[width=0.3\textwidth]{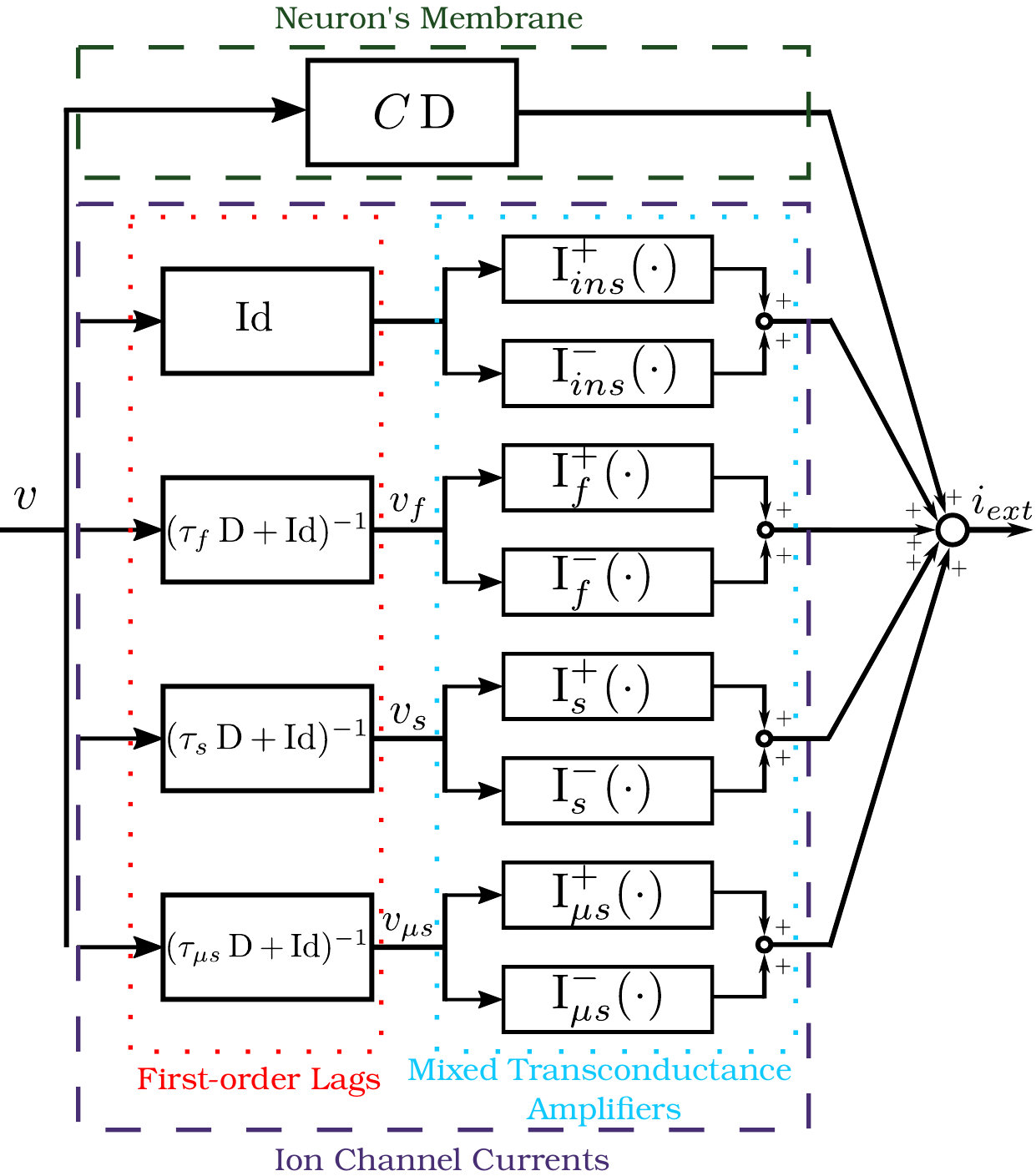} 
    \caption{Modeling the neuron's membrane with a capacitive element is a classical approach in neurodynamics. The first-order lags filter the voltage and generate voltages at different timescales ($v_f, v_s,$ and $v_{\mu s}$). The interaction of positive and negative conductances at different timescales captures the essence of excitability. }
    \label{Fig:Neuron_Architecture}
\end{figure}

In the model of Fig \ref{Fig:Neuron_Architecture}, every element is an operator that acts on signals belonging to $\mathcal{L}^2$. The full dynamics of a single neuron can thus be written as
\begin{equation}\label{Eq:Neuron_Dynamics}
    C \operatorname{D}v + \sum_{x \in \mathbb{T}_x}^{} \operatorname{I}^{\pm}_x ( (\tau _x \operatorname{D} + \operatorname{Id})^{-1}v) = i_{ext}
\end{equation}
where $\operatorname{D}: \mathcal{L}^2 \xrightarrow{}  \mathcal{L}^2 $ denotes the differentiation operator and, $\operatorname{I}^{\pm}_x: \mathcal{L}^2 \xrightarrow{}  \mathcal{L}^2 $ is an abstract representation of either $\operatorname{I}^{+}_x $  or $\operatorname{I}^{-}_x $ where $\operatorname{I}_x^+(v_x) = \alpha ^+ \tanh(v_x - \delta^+ _x)$ is a static nonlinear readout of the filtered voltage and $\alpha ^+ \in \mathbb{R}_{{\geq 0}}$. Similarly, $\alpha ^- \in \mathbb{R}_{{\leq 0}}$ for the operator $\operatorname{I}_x^-(v_x) = \alpha ^- \tanh(v_x - \delta ^- _x)$. The external input signal ${i}_{ext}$ belongs to the same space, meaning ${i}_{ext} \in \mathcal{L}^2$. The interested reader is referred to \cite{ribar2021neuromorphic,ribar2019neuromodulation} for further details and insight on the model and its tuning. $\mathbb{T}_x = \{ins,f,s,\mu s\}$ is the set of existing timescales and contains instantaneous, fast, slow and ultraslow. 

The terms $(\tau _{x} \operatorname{D} + \operatorname{Id} )^{-1}$ are first-order lags that generate voltages at different timescales (e.g., $v_ {\mu s} = (\tau _{\mu s} \operatorname{D} + \operatorname{Id} )^{-1} v$). This operator-theoretic representation of first-order lags corresponds to the ODE $\tau _{\mu s} \dot{v}_{\mu s} = v - v_ {\mu s} $.  

The output-to-input representation of the spiking neuron in Fig \ref{Fig:Neuron_Architecture} intentionally deviates from the traditional input-to-output flow used in classical system theory. In conventional system modeling, one typically applies an input (external current in this case) and analyzes how the system transforms it into an output (membrane voltage in this case). However, in the representation adopted here, this flow is reversed: the membrane voltage $v$, which would traditionally be considered an output, is instead treated as the input to an inverse operator that yields the external current $i_{ext}$.

This inversion significantly simplifies the representation of spiking systems. Specifically, it allows the entire neuron model to be expressed in a modular, \emph{single-layer} architecture. This drastically facilitates the use of splitting algorithms that sit at the center of operator-theoretic solvers. In this configuration, the first component---a capacitive element---models the neuron's membrane, while the remaining parallel branches capture the contributions of various ion channel currents. These branches operate independently and in parallel, filtering the voltage at different timescales and applying nonlinear transformations to produce excitatory or inhibitory current contributions.

The extension from a single neuron to a spiking network, while maintaining the single-layer architecture, is straightforward. Figure \ref{Fig:Network_Architecture} demonstrates this extension. In this figure, the ``neuron" block is exactly the dynamics presented in the left-hand side of (\ref{Eq:Neuron_Dynamics}) that describes the internal dynamics of the neuron. For the $k^{th}$ neuron, the internal ``neuron" operator is denoted as $ \operatorname{N}_k(v_k)$ and is defined as
\begin{equation}\label{Eq:Node_Dynamics}
    \operatorname{N}_k(v_k) = C \operatorname{D}v_k + \sum_{x \in \mathbb{T}_x}^{} \operatorname{I}^{\pm}_x ( (\tau _x \operatorname{D} + \operatorname{Id})^{-1}v_k)
\end{equation}
and the synaptic connections are modeled using the operator $\operatorname{I}_{jk}^{syn}: \mathcal{L}^2 \xrightarrow{}  \mathcal{L}^2 $ where $j$ denotes the index of the pre-synaptic neuron and $k$ is the index of the post-synaptic neuron that receives the current. The formalism of synaptic currents is identical to that of the ion channels in (\ref{Eq:Neuron_Dynamics}) and can be written as
\begin{align}\label{EQ:SynapseOT_Def}
 \operatorname{I}_{jk} ^{syn}( (\tau _{x} \operatorname{D} + \operatorname{Id} )^{-1} v_j) = \alpha _{jk} \operatorname{X}_{sig} (v_{j,x}- \delta^{syn}_{jk})
\end{align}
 which is the mathematical representation of the {\textit{synaptic dynamics}} block of Fig \ref{Fig:Network_Architecture}. Here, $\alpha _{jk}$ is the maximal conductance of the synaptic connection and its sign determines the excitatory or inhibitory nature of the connection. $\delta^{syn}_{jk}$ is an offset parameter used to tune the behavior of the synaptic connection. $\operatorname{X}_{sig}: \mathcal{L}^2 \xrightarrow{}  \mathcal{L}^2 $ represents the Sigmoid operator and $x$ represents the timescale of the voltage of the $j^{th}$ neuron that causes the synaptic current. The sole reason for the change from $\tanh (\cdot)$ in (\ref{Eq:Neuron_Dynamics}) to $\operatorname{X}_{sig}$ in (\ref{EQ:SynapseOT_Def}) is to remain consistent with the literature \cite{ribar2019neuromodulation,ribar2020synthesis}.  In practice, it is straightforward to achieve excitable behaviors solely using $\tanh (\cdot)$, thus creating the entire behavior with purely one format for ion channels.

\begin{figure}[h]
    \centering
    \includegraphics[width=0.3\textwidth]{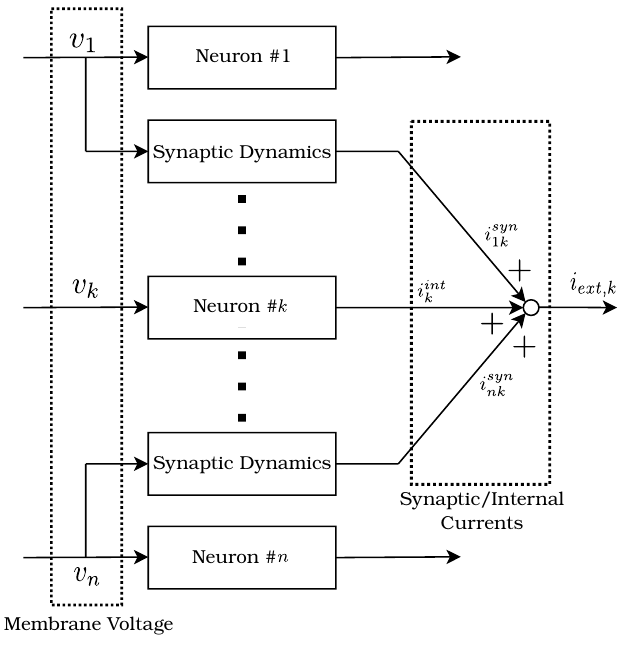} 
    \caption{The spiking network also has the same \emph{single-layer} architecture. It can be seen that the state (voltage) of all neurons can contribute to the dynamics of the $k^{th}$ neuron through synaptic currents. The effect of these synaptic connections is combined with the internal dynamics (denoted by $int$ superscripts) of the $k^{th}$ neuron (stemming from its own ion channels) and equates the externally injected current to the $k^{th}$ neuron.}
    \label{Fig:Network_Architecture}
\end{figure}

Relation (\ref{EQ:SynapseOT_Def}) models {\textit{active}} synaptic connections. Such synaptic couplings can be excitatory or inhibitory meaning that they can actively inject or extract current from neurons. {\textit{Diffusive coupling}} has been discussed and transformed into an operator-theoretic framework in \cite{shahhosseini2024large}.

To have an operator-theoretic description of the spiking network, the individual neurons of (\ref{Eq:Node_Dynamics}) must be combined using the synaptic connections of (\ref{EQ:SynapseOT_Def}) to yield the structure provided by Fig \ref{Fig:Network_Architecture} that leads to 
\begin{equation}{\label{EQ:network}}
    \begin{bmatrix}
    \operatorname{N}_1(\cdot) & \operatorname{I}_{21}^{syn}(\cdot)  & \cdots & \operatorname{I}_{n1}^{syn}(\cdot)\\ 
    \operatorname{I}_{12}^{syn}(\cdot) & \operatorname{N}_2(\cdot) &  & \operatorname{I}_{n2}^{syn}(\cdot)\\ 
    \vdots &  & \ddots & \vdots\\ 
    \operatorname{I}_{1n}^{syn}(\cdot) & \operatorname{I}_{2n}^{syn}(\cdot) &  \cdots & \operatorname{N}_n(\cdot)
    \end{bmatrix} \begin{bmatrix}
    v_1\\ 
    v_2\\ 
    \vdots\\ 
    v_n
    \end{bmatrix} = 
    \begin{bmatrix}
        i_{ext,1}\\ 
        i_{ext,2}\\ 
        \vdots\\ 
        i_{ext,n}
     \end{bmatrix}.
\end{equation}

This representation can be more compactly written as
\begin{equation}{\label{EQ:compact}}
    \operatorname{W}(\boldsymbol{{v}}) = \boldsymbol{i}_{ext}
\end{equation}
where $\operatorname{W}: \boldsymbol{\mathcal{L}}^2 \rightarrow \boldsymbol{\mathcal{L}}^2$ is the operator defining the dynamics of the spiking network and $\boldsymbol{{v}} = ( v_1, \dots, v_n)$ is the augmented membrane voltage vector that consists of all the membrane voltages, as denoted in (\ref{EQ:network}), and  $\boldsymbol{{v}} \in \boldsymbol{\mathcal{L}}^2$. This is similar for $\boldsymbol{i}_{ext}$.

In this representation, any neuron can have an arbitrary synaptic connection with another neuron, with no restrictions on network topology. However, specific network topologies may be of particular interest, such as the symmetric connections observed in Hopfield networks \cite{hopfield1982neural}. Thus, the problem of simulating spiking networks reduces to finding the augmented membrane voltage that satisfies (\ref{EQ:compact}).

\section{Methodology}

\subsection{Problem Formulation}

Equation (\ref{EQ:compact}) describes the dynamics of the spiking network. In fact, for a fixed set of external currents ($\boldsymbol{i}_{ext}^*$), the problem of simulating the behavior is equivalent to finding the set of membrane voltages $\boldsymbol{{v}}$ that satisfies 
\begin{equation}{\label{EQ:ZFP}}
    \operatorname{W}(\boldsymbol{{v}}) - \boldsymbol{i}_{ext}^* = \mathbf{0}
\end{equation}

This means that the problem of simulating the spiking network is in fact merely a zero-finding problem (ZFP) in the lifted Hilbert space; to find what  $\boldsymbol{{v}}$ satisfies (\ref{EQ:ZFP}). 

\subsubsection{Monotonicity---The Property of Interest}

Given the structure and properties of the operator $\operatorname{W}(\cdot)$, different methods can be used to solve (\ref{EQ:ZFP}). Among these properties, {\emph{monotonicity}} holds a special place. This property bridges physical interpretability with computational tractability. From a physical perspective, it indicates the incremental passivity of the elements or subsystems that it represents. Monotonicity is also closely related to convex optimization methods since the subgradient of any convex function is monotone.  Consequently, computationally tractable and efficient large-scale methods can be built for this class of systems \cite{rockafellar2015convex}. First-order methods that have recently drawn extensive attention rely on this exact property \cite{ryu2022large}.

The most common technique for solving monotone ZFPs is to convert them into fixed-point problems and then solve those fixed-point problems using fixed-point iteration (FPI) algorithms. In fact, monotonicity is the property that ensures the convergence of the fixed-point iteration algorithms toward a solution and links operator theory with convex optimization. This link, in its most basic form is discussed in the following theorem.

\begin{theorem}{\label{ZPFtoFPI}} 
    If $\operatorname{A}$ is maximally monotone and the set of the solutions of $\text{Zer} \operatorname{A}$ is non-empty, then
    \begin{equation}
        \text{Zer} \operatorname{A} = \text{Fix} \operatorname{J}_{\alpha \operatorname{A}}
    \end{equation}
    where $\operatorname{J}_{\alpha \operatorname{A}}$ is the resolvent operator of $\operatorname{A}$. To find the solutions of $\text{Fix} \operatorname{J}_{\alpha \operatorname{A}}$, fixed-point iteration algorithms (e.g., Proximal Point Algorithm) can be used \cite{bauschke_convex_2011}.
\end{theorem}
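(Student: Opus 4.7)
The plan is to establish the set equality $\text{Zer}\operatorname{A} = \text{Fix}\operatorname{J}_{\alpha\operatorname{A}}$ through a short chain of elementary equivalences, with maximal monotonicity invoked only at one crucial step to guarantee that the resolvent is well-defined as an operator on the whole space.

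First, I would unfold definitions. Fix any $x^\star \in \mathcal{L}^2$. By Definition of the zero set, $x^\star \in \text{Zer}\operatorname{A}$ means $0 \in \operatorname{A}(x^\star)$. Since $\alpha > 0$, this is equivalent to $0 \in \alpha\operatorname{A}(x^\star)$. Adding $x^\star$ to both sides of this inclusion yields
\begin{equation}
    x^\star \in x^\star + \alpha\operatorname{A}(x^\star) = (\operatorname{Id} + \alpha\operatorname{A})(x^\star).
\end{equation}
Taking the relational inverse of $\operatorname{Id} + \alpha\operatorname{A}$, the inclusion $x^\star \in (\operatorname{Id} + \alpha\operatorname{A})(x^\star)$ is equivalent to $x^\star \in (\operatorname{Id} + \alpha\operatorname{A})^{-1}(x^\star) = \operatorname{J}_{\alpha\operatorname{A}}(x^\star)$, which by Definition \ref{def:FPI} is exactly the statement $x^\star \in \text{Fix}\operatorname{J}_{\alpha\operatorname{A}}$. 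Running these equivalences in both directions delivers the desired set equality.

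The delicate step, and the only place where maximality is genuinely used, is the passage from a set-valued relational inverse to a bona fide operator-theoretic fixed point. A priori, $\operatorname{J}_{\alpha\operatorname{A}}$ defined by (\ref{eq:resolvent}) could be empty-valued or multi-valued. Here I would invoke Minty's theorem, which states that for a monotone operator $\operatorname{A}$ on a Hilbert space, $\operatorname{A}$ is maximally monotone if and only if $\text{range}(\operatorname{Id} + \alpha\operatorname{A}) = \mathcal{L}^2$ for every (equivalently, some) $\alpha > 0$. Combined with the standard fact that monotonicity of $\operatorname{A}$ forces $\operatorname{Id} + \alpha\operatorname{A}$ to be strongly monotone and hence injective, this guarantees that $\operatorname{J}_{\alpha\operatorname{A}}$ is single-valued and has full domain $\mathcal{L}^2$. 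The nonemptiness hypothesis on $\text{Zer}\operatorname{A}$ is then exactly what ensures that both sides of the equality in the theorem are nonempty and the statement is not vacuous.

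The main obstacle, such as it is, is therefore not computational but careful bookkeeping of the difference between set-valued and single-valued statements. Every manipulation above is algebraically routine; the conceptual content lies in citing Minty to upgrade the relational identity to a genuine operator identity. Once this is in place, the rest of the proof is essentially a one-line chain of iff's, which is why this result is standard in the monotone-operator literature \cite{bauschke_convex_2011} and serves as the foundational bridge between zero-finding problems and fixed-point iterations exploited in the remainder of the paper.
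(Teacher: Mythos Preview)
Your proof is correct and is the standard textbook argument for this result. Note, however, that the paper does not supply its own proof of this theorem: it is stated as a known fact with a citation to \cite{bauschke_convex_2011}, so there is no paper-internal proof to compare against. Your chain of equivalences $0 \in \operatorname{A}(x^\star) \Leftrightarrow x^\star \in (\operatorname{Id}+\alpha\operatorname{A})(x^\star) \Leftrightarrow x^\star \in \operatorname{J}_{\alpha\operatorname{A}}(x^\star)$, together with the invocation of Minty's theorem to guarantee that $\operatorname{J}_{\alpha\operatorname{A}}$ is single-valued with full domain, is exactly the argument one finds in the cited reference.
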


Study \cite{chaffey2023monotone} simulates monotone one-port circuits using operator-theoretic methods and demonstrates their advantages.

\subsubsection{Beyond Simple Monotone Problems}

The dynamics of the spiking neuron (\ref{Eq:Node_Dynamics}) and network (\ref{EQ:network}) can never be monotone. The justification for the breakdown of monotonicity is threefold. 

\begin{enumerate}
    \item \textbf{Nature of Excitability}: Excitability and spiking behavior is grounded in the interaction of active elements at fast timescales with dissipative elements at slower timescales \cite{ribar2021neuromorphic}. The active elements inject energy to the system and locally destabilize it while dissipative elements restore the equilibrium. By essence, the operator-theoretic description of this mechanism results in a mixed-monotone representation of spiking systems that is aligned with the mixed-feedback structure of spiking system \cite{sepulchre2022spiking}.
    \item \textbf{Model Component Structure}: The model of both internal ion channels (\ref{Eq:Neuron_Dynamics}) and synaptic connections (\ref{EQ:SynapseOT_Def}) is constructed via the composition of a static nonlinear monotone readout with a first-order lag (that is also monotone). Counterintuitively, monotone operators are not closed under composition and this causes another breakdown of monotonicity.
    \item \textbf{Off-diagonal Synaptic Dynamics}: The synaptic currents in (\ref{EQ:network}) are off-diagonal element of the network operator and can easily break monotonicity. This can be easily verified by considering a 2-by-2 network and plugging the network operator into the definition of monotonicity, as decribed in Definition \ref{def:Monotonicity}.
\end{enumerate}

Nevertheless, the loss of monotonicity does not correspond to a full loss of structure and does not transform the problem into an arbitrary ZFP. The modularity and architecture of the model used in this paper allow the decomposition of (\ref{EQ:compact}) into monotone and anti-monotone parts and the problem can be reformulated to a {\textit{difference of monotone}} (DM) zero finding problem. This reformulation grants the opportunity to borrow methods from difference of convex (DC) optimization and adapt them to efficient first-order methods grounded in operator theory to solve (\ref{EQ:compact}). In fact, a conventional method to solve nonconvex (with DC being a trivial subset) problems is to exploit Majorize-Minimization (MM) algorithms \cite{ortega2000iterative}. As it will become evident in the following section, the first-order methods that solve DM and DC problems are instances of the MM algorithms that are made tractable due to the structured formalism of the problem.

It is important to note that even if $\operatorname{W}$ were monotone, directly applying Theorem \ref{ZPFtoFPI} would still be impractical. The bottleneck lies in computing the resolvent of $\operatorname{W}$, which is often computationally intractable for sophisticated operators. This issue is a well-known limitation in fixed-point iteration methods. To overcome this, \textit{splitting methods} were introduced. These methods decompose a complex operator—such as $\operatorname{W}$—into a sum of simpler operators and modify the FPI scheme so that it relies only on the resolvents of the simpler operators. This divide-and-conquer approach is central to many scalable and distributed optimization algorithms. One of the most prominent examples in the control and optimization community is the Alternating Direction Method of Multipliers (ADMM) \cite{boyd2011distributed}.

\begin{remark}{\label{splitting_properties}}
    It is trivial that the splitting of the original operator is not arbitrary and the smaller operators must bear certain properties. First and foremost, the resolvent of the smaller operators should be numerically cheap to compute. Second, the smaller operators must be structured and entail specific properties (such as monotonicity) to ensure the convergence of the FPI acting on the split formalism. Finally, it would be ideal if the smaller operators provided physical interpretability as to how the smaller operators contribute to the grand scheme of the ZFP. Thus, the next immediate challenge is to find a natural splitting for the spiking neural network representation of (\ref{EQ:network}).
\end{remark}

\subsection{Splitting the Spiking Neuron}{\label{sec:shifts}}

To be able to solve the ZFP of (\ref{EQ:ZFP}), the problem has to be split into smaller operators while maintaining the difference of monotone structure. The single-layer architecture of the spiking neuron and the network will be central in obtaining the most natural splitting.

We first consider the case of an isolated spiking neuron. The objective is to break (\ref{Eq:Node_Dynamics}) into the difference of smaller monotone operators. Inspired by the single-layer architecture of the spiking neuron, illustrated in Fig. (\ref{Fig:Neuron_Architecture}), the splitting is of the form
\begin{equation}{\label{EQ:full}}
    \operatorname{N(\cdot)} = C \operatorname{D(\cdot)} + \sum_{x \in \mathbb{T}_x}^{} (\operatorname{A}_x(\cdot) - \operatorname{B}_x(\cdot))
\end{equation}
where $\operatorname{A}_{x}(\cdot) = \operatorname{\operatorname{I}_{x} ^+((\tau _{x} \operatorname{D} + \operatorname{Id} )^{-1}(\cdot))}$ that corresponds to positive conductance at timescale $x$ while $-\operatorname{B}_{x}(\cdot) = \operatorname{\operatorname{I}_{x} ^-((\tau _{x} \operatorname{D} + \operatorname{Id} )^{-1}(\cdot))}$ represents the negative conductance at timescale $x$. With this splitting, each smaller operator ($\operatorname{D}$, $\operatorname{A}_{x}$ or $\operatorname{B}_{x}$) corresponds to one element of the single-layer architecture. Furthermore, these smaller operators bear physical interpretation and signify the either the capacitor or the ion channel currents. 

Although $\operatorname{A}_{x}$ and $\operatorname{B}_{x}$ are not monotone, they are {\textit{hypomonotone}} \cite{combettes2004proximal}, meaning that they can be made monotone through simple shifts. The difference of monotone structure of the problem permits the addition and subtraction of a shift $\lambda \times \operatorname{Id}$ to $\operatorname{A}_{x}$ and  $\operatorname{B}_{x}$ (where $\lambda > 0$). The difference format of (\ref{EQ:full}) cancels out this addition, so the original problem remains unchanged, as illustrated in Figure \ref{fig:hypomonotone}.

\begin{figure}[h]
    \centering
    \includegraphics[width=0.4\textwidth]{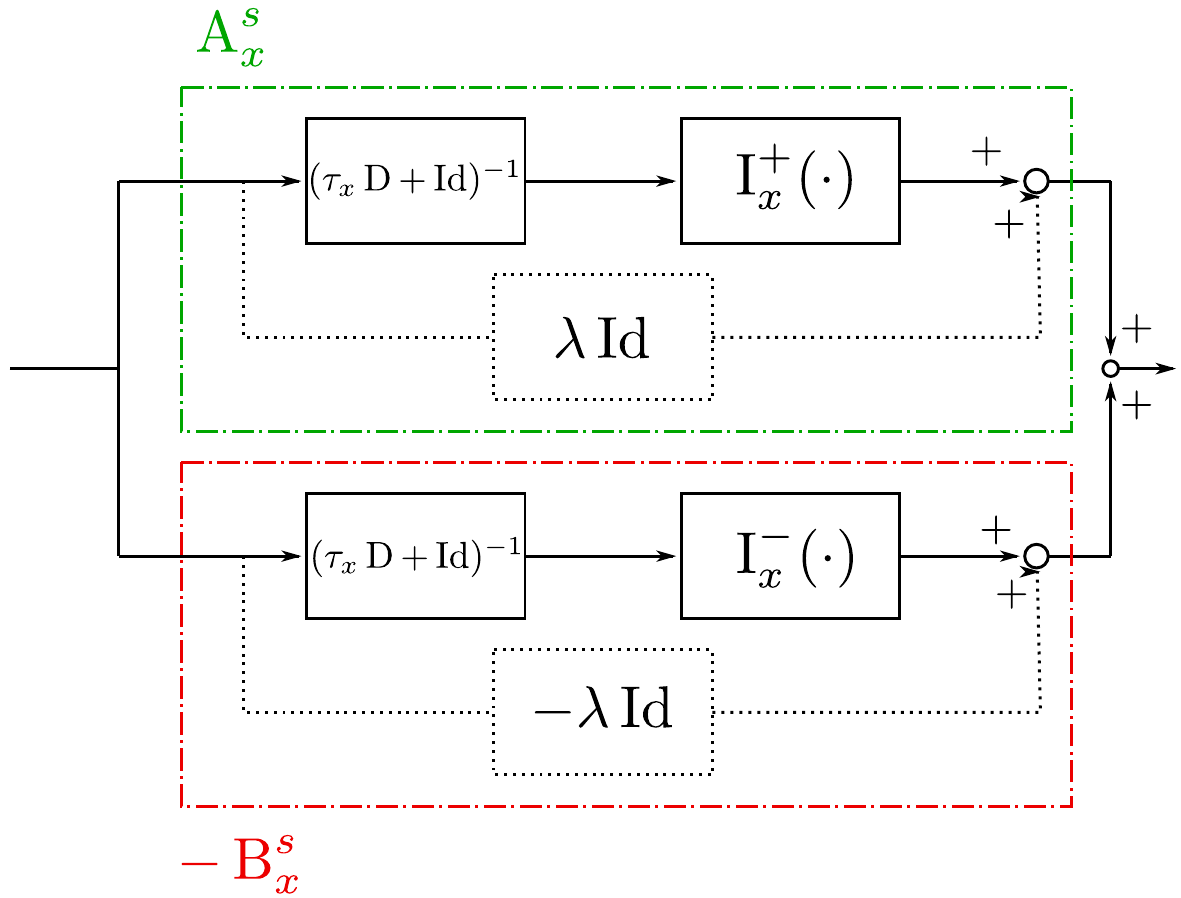} % Adjust the width as needed
    \caption{Addition of a positive shift operator to the composition of monotone operators makes the upper block monotone. The same shift is subtracted from the composition of the anti-monotone operator and the monotone operator to make it anti-monotone. This virtual addition does not alter the dynamics of the system.}
    \label{fig:hypomonotone}
\end{figure}

Physically, this corresponds to adding a linear resistor with positive conductance to the $\operatorname{I}^+_{x}$ branch and adding the exact same linear resistor but with negative conductance to the $\operatorname{I}^-_{x}$ branch of the model of Fig. \ref{Fig:Neuron_Architecture}. This yields a difference-of-monotone decomposition of the nodal dynamics with physical interpretability. The new shifted operators $\operatorname{A}_x^{\text{s}}$ and $\operatorname{B}_x^{\text{s}}$ are defined as
\begin{equation}{\label{EQ:shiftedOperators}}
   \operatorname{A} _{x}^{\text{s}} = \operatorname{A}_{x} + \lambda \operatorname{Id}, \quad \operatorname{B}_{x}^{\text{s}} = \operatorname{B}_{x} +\lambda \operatorname{Id}
\end{equation}
and so the splitting of the spiking neuron can be written as
\begin{equation}{\label{EQ:node_final}}
    \operatorname{N(\cdot)} = C \operatorname{D(\cdot)} + \sum_{x \in \mathbb{T}_x}^{} (\operatorname{A} _{x}^{\text{s}}(\cdot) - \operatorname{B} _{x}^{\text{s}}(\cdot))
\end{equation}
where each smaller operator is monotone.

\subsection{Splitting the Spiking Network}

The single-layer architecture of the spiking network, demonstrated in Fig \ref{Fig:Network_Architecture}, can also be used to obtain a network-level splitting. The splitting begins with decomposing (\ref{EQ:network}) into two subsystems, one taking into account only the diagonal elements (corresponding to neuron block dynamics) and the other covering off-diagonal elements (corresponding to synaptic dynamics) as

\begin{equation}
    \operatorname{W} = \operatorname{N}_{\text{full}}+ \operatorname{Y}.
\end{equation}

The operator $\operatorname{N}_{\text{full}}= diag(\operatorname{N}_1,\cdots,\operatorname{N}_n)$  is diagonal and as a consequence, each neuron block ($\operatorname{N}_k$) is isolated and its resolvent can be computed separately. It is obvious that the splitting of (\ref{EQ:node_final}) can now be used to break the $\operatorname{N}_{{\text{full}}}$ into smaller operators. 

The next step is to decompose the operator that accounts for synaptic couplings and is the off-diagonal part of (\ref{EQ:network}), denoted as $\operatorname{Y}$ and named the {\textit{synaptic dynamics}}. It is possible to split this operator into $n$ smaller operators where each denote the contribution of the synaptic connections to the dynamics of the $k^{th}$ neuron, as illustrated in Fig \ref{Fig:Synaptic_Architecture}. The interpretation is that the aggregate effect of all the synaptic currents on the $k^{th}$ neuron corresponds to the $k^{th}$ row of synaptic dynamics operator $\operatorname{Y}$ and is denoted as $\operatorname{Y}_k$. 

\begin{figure}[h]
    \centering
    \includegraphics[width=0.4\textwidth]{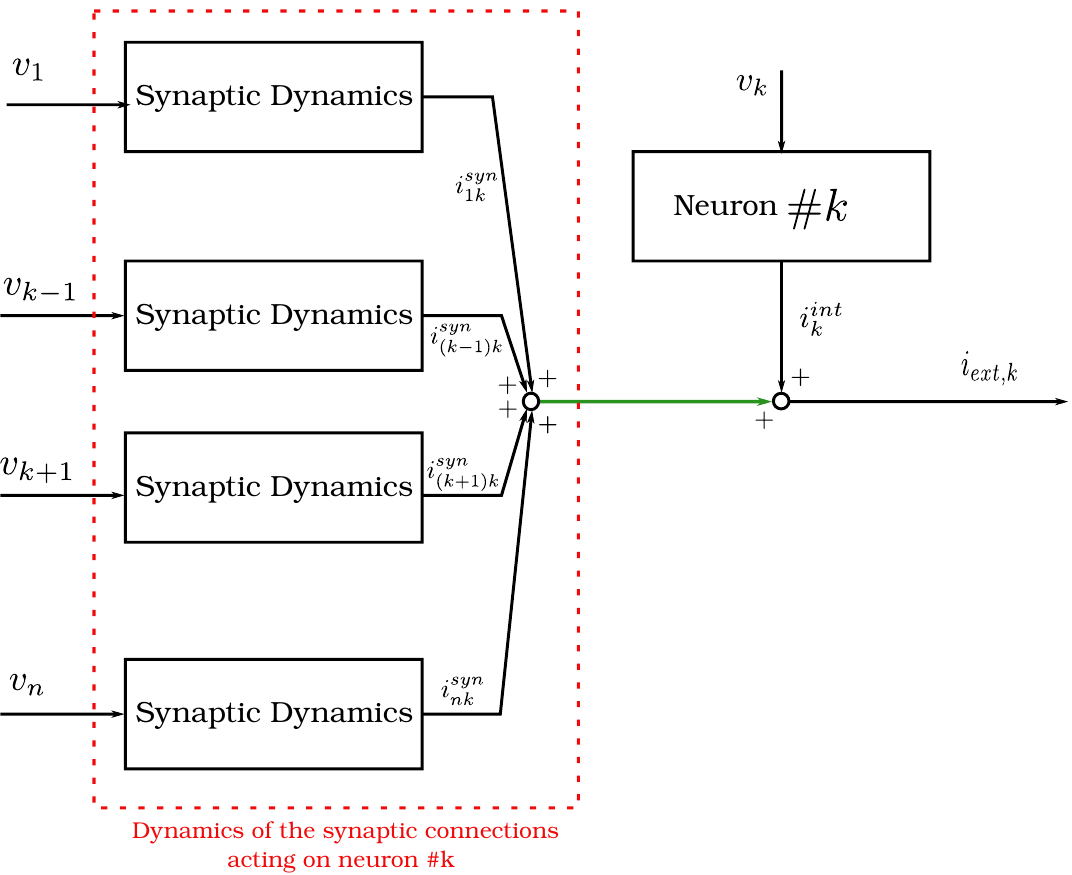} 
    \caption{The effect of the all the synaptic connections of the network on the $k^{th}$ neuron is demonstrated here. The green link represents the aggregate current that the $k^{th}$ neuron receives from its synaptic connections. Given the active nature of the synaptic connections, they rely upon the voltage of the pre-synaptic neuron. The combination of the externally injected current, synaptic currents and internal dynamics' currents govern the behavior of the neuron.}
    \label{Fig:Synaptic_Architecture}
\end{figure}

An issue with the smaller operator $\operatorname{Y}_k$ is its mixed monotonicity. With an argument similar to the one used for $\operatorname{A}_x$, it is possible to make these operators monotone by adding and subtracting the identity operator. Thus, it is possible to decompose $\operatorname{Y}$ as
\begin{equation}{\label{eq:synaptic_splitting}}
    \operatorname{Y} =  \sum_{i=1}^{n}\left ( \underset{\text{monotone}}{\underbrace{(\lambda _i ^{\text{syn}}\operatorname{Id} + \operatorname{Y}_i)}} -\underset{\text{monotone}}{\underbrace{\lambda _i ^{\text{syn}}\operatorname{Id}}} \right )
\end{equation}
where again, $\operatorname{Y}_i$ denotes an operator that contains the $i^{th}$ row of the $\operatorname{Y}$ operator and zeros everywhere else. For simplicity, we define the new operator $\operatorname{C}_i = \lambda _i ^{\text{syn}}\operatorname{Id} + \operatorname{Y}_i$ with $\operatorname{C}_i : \boldsymbol{\mathcal{L}}^2 \rightarrow  \boldsymbol{\mathcal{L}}^2$. Now, operator $\operatorname{Y}$ is also decomposed into a difference of monotone structure. Mathematically, this corresponds to
\begin{equation}{\label{EQ:Split_synaptic}}
    \operatorname{Y} =  \sum_{i=1}^{n}\left ( \underset{\text{monotone}}{\underbrace{\operatorname{C}_i}} -\underset{\text{monotone}}{\underbrace{\lambda _i ^{\text{syn}} \operatorname{Id}}} \right )
\end{equation}

It must be noted that the resolvent of operator $\operatorname{C}_i$ is extremely easy to compute via elementary linear algebra operations. It is now possible to combine the splitting in (\ref{EQ:node_final}) and (\ref{EQ:Split_synaptic}) to provide the full splitting of a spiking neural network. Explicitly, for a network with $n$ neurons
\begin{align}\label{eq:full_splitting}
    & \operatorname{W} = \operatorname{N}_{\text{full}} + \operatorname{Y} = \\ \nonumber
    & \underset{\text{Effect of Neuron Dynamics}}{\underbrace{diag(\cdots,C_k \operatorname{D} + \sum_{x \in \mathbb{T}_x^k}^{} ({\operatorname{A}} _{x,k}^{s}(\cdot) - {\operatorname{B}} _{x,k}^{s}(\cdot)),\cdots)}}   \\
    & +\underset{\text{Effect of Synaptic Dynamics}}{\underbrace{\sum_{i=1}^{n}\left ( \operatorname{C}_i -\lambda _i ^{\text{syn}} \operatorname{Id} \right )}} \nonumber
\end{align}
will be split into $2n+2m+1$ smaller operators, where $m$ is the number of timescales. The new subscript $k$ used in (\ref{eq:full_splitting}) refers to the $k^{th}$ neuron, as the internal dynamics of the neurons need not be identical. This splitting is aligned with the requirements discussed in remark \ref{splitting_properties}. The operator describing the spiking network $\operatorname{W}$ is broken into smaller operators that are computationally easy to deal with. The splitting has a clear difference of monotone structure and retains the physical interpretation of the circuit topology.

\subsection{A Novel Splitting Algorithm to Solve Spiking Networks}

Splitting algorithms for convex optimization and monotone operators have gained popularity for their scalability. Their success stems from their strong ties to convex optimization and the contraction properties of the resolvent operator. Prominent examples include the Alternating Direction Method of Multipliers (ADMM) \cite{boyd2011distributed}, Forward-Backward Splitting (FBS) \cite{ryu2022large}, and Davis-Yin Splitting (DYS) algorithm \cite{davis2017three}.

In comparison, splitting algorithms for DM problems are less developed. This is because DM problems are extremely broad, covering nearly all possible problems. 

Specialized DM algorithms were developed recently \cite{chuang2022unified}. Nevertheless, these methods can only handle up to three operators, which is limiting. To address this limitation, we propose a novel splitting algorithm that can handle any number of operators in a DM setup.

\subsubsection{Consensus-based Difference of Monotone Douglas-Rachford Algorithm}

The problem of interest has the formulation
\begin{equation}{\label{eq:method}}
    \operatorname{E}(x) + \sum\limits_{i=1}^{p}(\operatorname{F}_i(x)-\operatorname{G}_i(x)) = 0
\end{equation}
where $\operatorname{E}$, $\operatorname{F}_i$ and, $\operatorname{G}_i$ are monotone operators. To further exploit the structure of our problem and simplify the proof of convergence, further assumptions are put on the operators. The signal space of (\ref{eq:method}) is constrained to only include energy-bounded signals, thus $x \in \mathcal{L}^2$. Further, set $I = \{1,\ldots, p\}$ and
    \begin{equation}{\label{eq:definitions}}
    \left\{
    \begin{array}{l}
    \boldsymbol{C} = \{(z, \ldots, z) \in \boldsymbol{\mathcal{L}}^2 \mid z \in \mathcal{L}^2\}, \\
    {\operatorname{j}}: \mathcal{L}^2 \to \boldsymbol{C}, \; x \mapsto (x, \ldots, x), \\
    \operatorname{P}_{\boldsymbol{C}} : \boldsymbol{\mathcal{L}}^2 \to \boldsymbol{\mathcal{L}}^2, \operatorname{P}_{\boldsymbol{C}} \boldsymbol{z}={\operatorname{j}}\left(\frac{1}{p} \sum_{i \in I} z_{i}\right) \\
    \boldsymbol{\operatorname{F}}: \boldsymbol{\mathcal{L}}^2 \to \boldsymbol{\mathcal{L}}^2, x \mapsto (F_1(x_1), F_2(x_2), \ldots, F_p(x_p))\\
    \boldsymbol{\operatorname{G}}: \boldsymbol{\mathcal{L}}^2 \to \boldsymbol{\mathcal{L}}^2, x \mapsto (G_1(x_1), G_2(x_2), \ldots, G_p(x_p))\\
    \end{array}
    \right.
    \end{equation}

    where $\boldsymbol{\mathcal{L}}^2$ denotes the lifted space described in the mathematical preliminaries section, $\boldsymbol{C}$ represents the consensus set, ${\operatorname{j}}$ describes the lifting operator, $\operatorname{P}_{\boldsymbol{C}}$ is the operator that projects a signal  to the consensus set $\boldsymbol{C}$. $ \boldsymbol{\operatorname{F}}$ and $ \boldsymbol{\operatorname{G}}$ are concatenations of the operators $\operatorname{F}_i$ and $\operatorname{G}_i$ respectively, and these definitions closely match that of the literature \cite{bauschke_convex_2011}.

    \begin{theorem}{\label{thm:main}}
    Let $\mathcal{D} \subseteq \mathcal{L}$ be closed, convex and nonempty, and $\alpha > 0$.  Let $\operatorname{E}, \operatorname{F}_i, \operatorname{G}_i,: \mathcal{L}^2 \to \mathcal{L}^2$ satisfy the following assumptions:
    \begin{enumerate}
        \item Operator $\operatorname{E}$ is $\rho$-strongly monotone on $\mathcal{D}$
        \item Operator  $\operatorname{F}_i$ is $\gamma$-strongly monotone on $\mathcal{D}$ for all $i \in I$
        \item Operator $\operatorname{G}_i$ satisfies
        \begin{equation}
             (\beta + \frac{1}{\epsilon})\norm{y - \hat{y}}^2 \geq \ip{\operatorname{G}_i y - \operatorname{G}_i \hat{y}}{y-\hat{y}} \geq \beta \norm{y - \hat{y}}^2
        \end{equation}
        for all $y, \hat y \in \mathcal{D}$ and $i \in I$.
        \item The condition $\gamma > \beta + 1 + \frac{2}{\epsilon}$ holds on the subset ${\mathcal{D}}$
        \item $\boldsymbol{\operatorname{T}}: \boldsymbol{\mathcal{L}}^2 \to \boldsymbol{\mathcal{L}}^2$ defined by 
    \begin{equation}{\label{eq:FPI}}
        \begin{aligned}
        & \boldsymbol{\operatorname{T}}(\cdot) = \operatorname{Id}(\cdot) - \operatorname{J}_{\alpha \boldsymbol{\operatorname{E}}}(\operatorname{P}_{\boldsymbol{C}}(\cdot)) \\
        & + \operatorname{J}_{p \alpha \boldsymbol{\operatorname{F}}}(2\operatorname{J}_{\alpha \boldsymbol{\operatorname{E}}}(\operatorname{P}_{\boldsymbol{C}}(\cdot)) - \operatorname{Id}(\cdot) + p \alpha \boldsymbol{\operatorname{G}}(\operatorname{J}_{\alpha \boldsymbol{\operatorname{E}}}(\operatorname{P}_{\boldsymbol{C}}(\cdot)))
        \end{aligned}
    \end{equation}
    satisfies $T(\boldsymbol{\mathcal{D}}) \subseteq \boldsymbol{\mathcal{D}}$,
    \end{enumerate}
    where $\boldsymbol{\mathcal{D}} = 
    \bigoplus_{i \in I} \mathcal{D}$ is the lifted subset.
    Then, the fixed-point iteration
    \begin{equation}{\label{eq:FPI_basic}}
        \boldsymbol{z}^{k+1} = \boldsymbol{\operatorname{T}}(\boldsymbol{z}^k)
    \end{equation}
    converges to a solution of \eqref{eq:method}, if such a solution exists.

    \begin{proof}: The proof of convergence for this FPI algorithm is discussed in detail in Appendix \ref{appendix}. 
    \end{proof}
\end{theorem}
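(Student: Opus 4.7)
The plan is to follow the standard two-step strategy for fixed-point schemes: first, show that fixed points of the operator $\boldsymbol{\operatorname{T}}$ defined in \eqref{eq:FPI} encode solutions of the difference-of-monotone inclusion \eqref{eq:method}; second, show that $\boldsymbol{\operatorname{T}}$ is a strict contraction on $\boldsymbol{\mathcal{D}}$ under assumptions (1)--(4), so that Banach's fixed-point theorem (the first clause of Definition \ref{def:FPI}) delivers convergence of the iterates \eqref{eq:FPI_basic}. Assumption (5) then serves purely as an invariance condition that keeps the iterates in the set where the quantitative monotonicity bounds are valid.

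For the first step, I would rewrite \eqref{eq:FPI} in the more transparent form
\begin{equation*}
    \boldsymbol{\operatorname{T}}(\boldsymbol{z}) = \boldsymbol{z} - \boldsymbol{u} + \operatorname{J}_{p\alpha \boldsymbol{\operatorname{F}}}\bigl(2\boldsymbol{u} - \boldsymbol{z} + p\alpha \boldsymbol{\operatorname{G}}(\boldsymbol{u})\bigr), \qquad \boldsymbol{u} = \operatorname{J}_{\alpha \boldsymbol{\operatorname{E}}}(\operatorname{P}_{\boldsymbol{C}} \boldsymbol{z}),
\end{equation*}
which exposes a Davis--Yin-like template with a \emph{sign-flipped} forward step on $\boldsymbol{\operatorname{G}}$ and a consensus projection replacing the outer identity. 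At a fixed point $\boldsymbol{z}^*$, unwinding the two resolvent characterizations yields
\begin{equation*}
    \alpha \boldsymbol{\operatorname{E}}(\boldsymbol{u}^*) + p\alpha\bigl(\boldsymbol{\operatorname{F}}(\boldsymbol{u}^*) - \boldsymbol{\operatorname{G}}(\boldsymbol{u}^*)\bigr) \ni \operatorname{P}_{\boldsymbol{C}} \boldsymbol{z}^* - \boldsymbol{z}^*.
\end{equation*}
Because $\operatorname{P}_{\boldsymbol{C}}\boldsymbol{z}^* \in \boldsymbol{C}$ and $\boldsymbol{\operatorname{E}}$ is the componentwise diagonal lift of $\operatorname{E}$, the resolvent $\operatorname{J}_{\alpha\boldsymbol{\operatorname{E}}}$ preserves the consensus subspace, so $\boldsymbol{u}^* = \operatorname{j}(x^*)$ for some $x^* \in \mathcal{L}^2$. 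Summing the $p$ coordinates annihilates the right-hand side (which lives in $\boldsymbol{C}^\perp$) and recovers exactly $\operatorname{E}(x^*) + \sum_i (\operatorname{F}_i(x^*) - \operatorname{G}_i(x^*)) = 0$, i.e.\ the inclusion \eqref{eq:method}.

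For the second step, I would bound each building block in the lifted norm: $\operatorname{P}_{\boldsymbol{C}}$ is firmly nonexpansive as a projection onto a closed linear subspace; $\operatorname{J}_{\alpha \boldsymbol{\operatorname{E}}}$ contracts with modulus $1/(1+\alpha\rho)$ by $\rho$-strong monotonicity; $\operatorname{J}_{p\alpha\boldsymbol{\operatorname{F}}}$ contracts with modulus $1/(1+p\alpha\gamma)$; and the two-sided inequality on $\boldsymbol{\operatorname{G}}$ yields a Lipschitz constant of at most $\beta + 1/\epsilon$. Chaining these estimates through the template above produces
\begin{equation*}
    \norm{\boldsymbol{\operatorname{T}}(\boldsymbol{z}) - \boldsymbol{\operatorname{T}}(\hat{\boldsymbol{z}})} \le \kappa(\alpha, p, \rho, \gamma, \beta, \epsilon)\,\norm{\boldsymbol{z} - \hat{\boldsymbol{z}}}
\end{equation*}
on $\boldsymbol{\mathcal{D}}$, and the condition $\gamma > \beta + 1 + 2/\epsilon$ is precisely the algebraic threshold that forces $\kappa < 1$. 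Assumption (5) then propagates the contraction across iterations, and Banach delivers convergence to the unique fixed point in $\boldsymbol{\mathcal{D}}$, which by the first step corresponds to a solution of \eqref{eq:method}.

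The main obstacle is the contraction estimate in step two, because $\boldsymbol{\operatorname{G}}$ enters the argument of $\operatorname{J}_{p\alpha\boldsymbol{\operatorname{F}}}$ with the \emph{wrong} sign for the standard averaged-operator calculus: there is no cocoercivity of $-\boldsymbol{\operatorname{G}}$ to exploit, so the textbook Davis--Yin three-operator contraction argument does not close. The strong monotonicity constant $\gamma$ of $\boldsymbol{\operatorname{F}}$ must simultaneously absorb the Lipschitz contribution $\beta + 1/\epsilon$ of $\boldsymbol{\operatorname{G}}$, the reflected-resolvent expansion inherent to Douglas--Rachford, and the cross terms between the consensus projection and the diagonal resolvents. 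Balancing these contributions sharply enough to obtain the clean inequality $\gamma > \beta + 1 + 2/\epsilon$ (rather than a looser sufficient condition) is the technical heart of the proof and is deferred to the appendix.
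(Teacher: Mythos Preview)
Your overall plan---fixed-point characterisation followed by a convergence argument---matches the paper's structure, but the mechanisms differ in both steps, and the second step has a genuine gap.

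For the lifting, the paper does not sum coordinates directly from the fixed-point relation. Instead it introduces the normal cone $N_{\boldsymbol C}$ of the consensus set, proves the identity $J_{\alpha(\boldsymbol{\operatorname{E}}+N_{\boldsymbol C})} = J_{\alpha\boldsymbol{\operatorname{E}}}\circ P_{\boldsymbol C}$ as a separate lemma, and uses $\operatorname{zer}(\boldsymbol{\operatorname{E}}+N_{\boldsymbol C}+\boldsymbol{\operatorname{F}}-\boldsymbol{\operatorname{G}}) = \operatorname{j}\bigl(\operatorname{zer}(M\operatorname{E}+M\sum_i(\operatorname{F}_i-\operatorname{G}_i))\bigr)$ to collapse the $p$-operator problem back to a \emph{three}-operator problem $(A,B,C)\leftarrow(\boldsymbol{\operatorname{F}},\,\boldsymbol{\operatorname{E}}+N_{\boldsymbol C},\,\boldsymbol{\operatorname{G}})$ in the product space. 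Your coordinate-summing argument is morally equivalent but you should be aware that $P_{\boldsymbol C}\boldsymbol z^*-\boldsymbol z^*$ living in $\boldsymbol C^\perp$ is exactly the normal-cone mechanism.

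The contraction route, however, does not close. First, assumption~(3) bounds only the \emph{inner product} $\langle G_iy-G_i\hat y,\,y-\hat y\rangle$, not $\|G_iy-G_i\hat y\|$; it does not deliver a Lipschitz constant $\beta+1/\epsilon$ for $\boldsymbol{\operatorname{G}}$ as you assert. Second, even if every block were Lipschitz, the template $T(z)=z-u+J_{p\alpha\boldsymbol{\operatorname{F}}}(\cdots)$ carries a bare identity, so any triangle-inequality chaining yields $\kappa\ge 1$; there is no way to beat $1$ without exploiting cancellations. The paper therefore does \emph{not} prove contraction and does not invoke Banach. It proves that the three-operator map is $\theta$-\emph{averaged} on $\mathcal D$ with $\theta=2/(4-\alpha\beta^2)$ and invokes Krasnosel'ski\u{\i}--Mann. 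Concretely, it expands $\|z_1-\hat z_1\|^2$ via $z_1-z_0=x_1-x_{1/2}$, reduces the averagedness inequality to showing a residual is nonpositive, and splits that residual into two pieces $P_1,P_2$. The piece $P_2$ is a perfect square; the piece $P_1$ is rewritten using the resolvent identities $\alpha A x_1=2x_{1/2}-z_0+\alpha C x_{1/2}-x_1$ and $\alpha B x_{1/2}=z_0-x_{1/2}$ so that only the inner-product hypotheses on $A,B,C$ are needed, and the condition $\gamma>\beta+1+2/\epsilon$ emerges in the final line of the $P_1$ estimate---as a sign condition on a quadratic form, not as a Lipschitz threshold.
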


The result of Theorem \ref{thm:main} offers an FPI algorithm that can solve the DM problem of (\ref{eq:method}). As previously discussed, the condition of strong monotonicity is not a strong imposition since the difference of monotone structure allows the addition and subtraction of elements without affecting the original problem. The pseudo-code for this algorithm is provided as follows.

\begin{algorithm}[H]
    \caption{Consensus-based Difference of Monotone Douglas-Rachford}
    \begin{algorithmic}[1]
    %\REQUIRE in
    \ENSURE  Satisfaction of assumptions related to $\operatorname{E}$, $\operatorname{F}_i$ and, $\operatorname{G}_i$
    \\ \textit{Initialisation} : Initial guesses for $x$ and $z$
    \\ 
     \FOR {$k = 1$ to $\text{max-iteration}$}
     \STATE $x^{k+1} = \operatorname{J}_{\alpha \operatorname{E}}(\bar{z}^k)$
     \FOR{$i=1:m$}
     \STATE $z^{k+1} _i = z^{k} _i - x^{k+1} + \operatorname{J}_{p \alpha \operatorname{F}_i}(2x^{k+1} - z^{k} _i +p \alpha \operatorname{G}_i(x^{k+1}))$
     \ENDFOR
     \STATE $\bar{z}^{k+1} = \frac{1}{p} \sum_{i = 1}^{p} z_i^k$
     \ENDFOR
    \RETURN $x$
    \end{algorithmic}
\end{algorithm}

With this algorithm, it is now possible to solve arbitrary spiking neural networks after recasting them into the form (\ref{eq:full_splitting}) and then, (\ref{eq:method}).

\subsubsection{Technical Notes on Efficient Computational Methods}

Although Algorithm 1 uses first-order methods, this does not guarantee computational efficiency. The main bottleneck lies in computing the resolvent operator $\operatorname{J}_{ \operatorname{F}_i}$. Due to the nonlinear and dynamic nature of the $\operatorname{F}_i$ operators, a closed-form solution for their resolvent is not feasible. However, numerical methods can approximate the output of this operator. In this section, a computational method is developed to numerically approximate $\operatorname{J}_{ \operatorname{F}_i}$ using an inner iterative solver.

\begin{remark}
    Given the single-layer architecture of the network and the rigid structure of each line in this single-layer architecture, every line (except for the capacitor) can be reformulated to have the form

    \begin{equation}{\label{EQ:Ax}}
        \operatorname{F}_i(\cdot) = \alpha ^+ \tanh((\tau _{x} \operatorname{D} + \operatorname{Id} )^{-1} (\cdot) - \delta _x)
    \end{equation}
    and although the synaptic dynamics uses a Sigmoid nonlinearity, the overall structure is identical. Thus, any numerical procedure developed for (\ref{EQ:Ax}) can be used for the synaptic dynamics as well.

\end{remark}

  Using Definition \ref{def:Resolvent}, it is possible to write the resolvent of $\operatorname{F}_i$ as
    \begin{equation}{\label{EQ:AxR}}
        \operatorname{J}_{\gamma \operatorname{F}_i}(\cdot) = (\operatorname{Id} + \gamma \operatorname{F}_i(\cdot))^{-1}
    \end{equation}
and then apply this operator on a known signal $w$ to obtain the signal $q$. This results in the expression $q = \operatorname{J}_{\gamma \operatorname{A}_x}(w)$. By using (\ref{EQ:Ax}) and (\ref{EQ:AxR}), it is possible to obtain
    \begin{equation}{\label{EQ:dynamicSplit}}
        w  = q + \gamma \operatorname{F}_i(q) = q + \gamma \alpha ^+ \tanh((\tau _{x} \operatorname{D} + \operatorname{Id} )^{-1} (q) - \delta _x)
    \end{equation}
   where the auxiliary operator $\operatorname{L}$ and auxiliary variable $u$  are defined as $\operatorname{L}= (\tau _{x} \operatorname{D} + \operatorname{Id} )$ and $u = \operatorname{L}^{-1}q$. It is now possible to rewrite (\ref{EQ:dynamicSplit}) as
    \begin{equation}{\label{EQ:dynsplit}}
        \operatorname{L}u + \gamma \alpha ^+ \tanh(u - \delta _x) = w.
    \end{equation}
    
    In this setup, given  a known signal $w$, (\ref{EQ:dynsplit}) is a ZFP that consists of two monotone operators. Methods such as Douglas-Rachford splitting algorithms can be used to solve this problem \cite{ryu2022large}. These methods work for all choices of $\gamma>0$, but one needs to solve this ZFP for every timescale of every neuron and every non-instantaneous synaptic connection. Thus, this step makes up a major computational portion of the algorithm and must be done as efficiently as possible. In fact, the use of methods such as Forward-Backward or Douglas-Rachford splitting is computationally expensive, even with a warm-start. This motivates introducing a computationally cheap approach for evaluating this resolvent. 

As discussed in \ref{sec:shifts}, operator $\operatorname{F}_i$ can be shifted to conform to the monotonicity requirements of splitting methods. After a shift, this operator can be written as
 \begin{equation}{\label{EQ:ShiftedFancy}}
        \operatorname{F}_i^{\text{s}}(\cdot) = \alpha ^+ \tanh((\tau _{x} \operatorname{D} + \operatorname{Id} )^{-1} (\cdot) - \delta _x) +\lambda \operatorname{Id}
    \end{equation}
and this new operator leads to the resolvent expression of
\begin{equation}
    q +\gamma  \lambda  q + \gamma \alpha ^+ \tanh(\operatorname{L}^{-1} (q) - \delta _x) = w
\end{equation}
and this expression can directly be transformed into a fixed-point iteration 
  \begin{equation}{\label{eq:AlternativeResolventShifted}}
        q^{o+1}  = \frac{w}{1 + \lambda \gamma} - \frac{\gamma \alpha ^+}{1 + \lambda \gamma} \tanh((\tau _{x} \operatorname{D} + \operatorname{Id} )^{-1} (q^o) - \delta _x)
    \end{equation}
where $o$ denotes the counter for this iterative solver that resides within the main iterative solver of Algorithm 1. The last remaining step is to show the conditions for the convergence of this numerical scheme. This is done through showing the contractiveness of the operator governing the FPI of (\ref{eq:AlternativeResolventShifted}), explicitly formulated as
\begin{equation}{\label{EQ:AlternativeResolvent}}
    \operatorname{T}_{\text{res}}(q) =  \frac{w}{1 + \lambda \gamma} - \frac{\gamma \alpha ^+}{1 + \lambda \gamma} \tanh((\tau _{x} \operatorname{D} + \operatorname{Id} )^{-1} (q) - \delta _x)
\end{equation}

To find the conditions that ensure this contractiveness, the Lipschitz continuity of the $\tanh$ function, described as $\norm{\tanh(x) - \tanh(y)} \leq \norm{x - y}$ is exploited. To obtain the conditions of contractiveness, it is possible to use the definition OF contractiveness and obtain the expression 
\begin{equation}{\label{limitsofshift}}
    \begin{aligned}
        \norm{ \operatorname{T}_{\text{res}}(q^{o+1}) - \operatorname{T}_{\text{res}}(q^{o})} \leq |\frac{\gamma \alpha ^+}{1 + \lambda \gamma}|  \norm{q^{o+1} - q^{o}}
    \end{aligned}
\end{equation}
where, by the Banach fixed-point theorem, this iterative scheme converges upon having a gain of less than one for the operator. Thus, the condition reduces to $|\frac{\gamma \alpha ^+}{1 + \lambda \gamma}| < 1$. In this setup, it is possible to choose large values of $\gamma$ and quickly compute the resolvent by choosing an adequate offset $\lambda$. This is not feasible without introducing a shift since without a shift, the contractiveness condition reduce to $|\gamma \alpha ^+| < 1$.

\subsubsection{Time-Frequency Hopping and Exploiting the Structure}

Given the structured architecture of the model, it is possible to further accelerate the computational steps. The dynamic operators, including both $\operatorname{D}$ that represents the capacitor and $\operatorname{L}$ that represents the first-order lags, are LTI systems. The goal of this subsection is to further exploit this structure. 

It is well-known that LTI systems are easier to deal with in the frequency domain since their representation will be diagonal. The idea of time-frequency hopping is to perform the computations of operators that consist solely of LTI in the frequency domain and the computations of static nonlinearities in the time domain. This hopping between the frequency domain and the time domain is achieved by computing the FFT and iFFT of the signals. The details of this approach can be found in \cite{shahhosseini2024large}. Exploiting the structure of LTI operators ensures that the computational costs of this step remain minimal and of the order $\mathcal{O}(g \log g)$ where $g$ is the length of the discretization of vector $v_k$. This means that line 2 of Algorithm 1 which corresponds to the $\operatorname{D}$ is carried out in the frequency domain while line 4 is carried out in the time domain.

\section{Illustrations and Discussions}

This section presents three examples to demonstrate the capabilities of the proposed method. It begins with a simple example as a tutorial and demonstrates the built-in mechanisms for variability analysis. It then proceeds to simulate a bursting neuron to showcase the solution evolution mechanism for of this approach. Afterward, a half-center oscillator system \cite{brown1914nature}, consisting of two bursting neurons, is simulated and the {\emph{behavior-informed}} nature of this solver, alongside its {\emph{event-capturing}} mechanism, is shown.

\subsection{Solver Configurations and Implications}

The proper implementation of this simulation framework allows the efficient use of computational resources. This subsection explores four different aspects that are directly involved with the technical implementation of this framework. It provides intuition as to how they affect the simulation and how they should be chosen or tuned.

\subsubsection{Pre-processing frequent components}

Due to the LTI structure of the operator $\operatorname{E}$, it is possible to construct its matrix representation and consequently, its resolvent $\operatorname{J}_{\alpha \operatorname{E}}$ in the frequency domain. This matrix is used at every iteration of Algorithm 1 and does not change throughout the simulation. Therefore, it is computationally convenient to compute this term once at initialization and repeatedly use it within the structure of the solver. The same argument can be used for the first-order lags within the structure of $\operatorname{F}_i$ and $\operatorname{G}_i$. The pre-computation of these operators permits their efficient call during the simulation.

\subsubsection{Choice of the hyperparameters and their meaning}

This simulation framework has five hyperparameters. The first one is $\alpha$ which is the stepsize of Algorithm 1 and dictates the change at every iteration. A small choice of $\alpha$ typically corresponds to slower convergence while a larger $\alpha$ allows larger changes at each iteration. Nevertheless, the value of $\alpha$ is always bounded from above since very large steps result in the instability of the simulation. This corresponds well with the stepsize in implicit numerical integration methods.

The second hyperparameter is the sampling frequency $F_s$ and is used to dictate the resolution of the solution signal. A large $F_s$ corresponds to detailed sampling and consequently enlarges the finite dimension of all the operators that have to act on the solution and auxiliary vectors. On the other hand, a small $F_s$ speeds up the computational cost of every iteration but, it might provide rough estimations of the solution. A very small choice of $F_s$ might cause instability in the case of a bursting neuron due to clear computational errors.

The third hyperparameter is the maximum number of iteration counts allowed. This serves as an upper limit for the number of iterations before forceful termination. Depending on the complexity of the single neuron and the quality of the initial condition, the authors recommend 500 to 7500 as a suitable number. Bear in mind that the termination of Algorithm 1 should ideally come from the criterion of subsection \ref{criterion}.

The hyperparameter $\lambda$ is used to shift the operators $\operatorname{F}_i$ and $\operatorname{G}_i$ into monotonicity. Therefore, its lower bound must always be chosen in a way that guarantees monotonicity. Study \cite{chaffey2023monotone} offers this limit for the operators of this paper. Furthermore, as discussed in (\ref{limitsofshift}), the right choice of $\lambda$ allows the choice of a larger $\alpha$ that accelerates the convergence of the algorithm. Nevertheless, a very large $\lambda$ fades the effect of the original operator and thus, more iterations are needed for convergence.

The final hyperparameter is the {\textit{simulation time}}. Similar to $F_s$, the simulation time also affects the length of the solution vector and a very large simulation time slows down the simulation significantly. Nevertheless, since this paper targets events, the simulation always starts at rest and ends at rest. Thus, the simulation time must be long enough to capture the full event. Furthermore, due to the use of FFT and iFFT, there's an assumption of periodicity (which is aligned with the rest-to-rest framework). This means that if the simulation time is not long enough, the solution cannot go from rest to rest but the solver enforces this and thus, the solution would be inaccurate at best.

\subsubsection{Termination Criterion}{\label{criterion}}

The termination criterion for Algorithm 1 is similar to most iterative algorithms and relies on the relative change of the solution at every iteration. Here, the $\frac{\norm{x^{k+1} - x^{k}}}{\norm{x^{k}}}$ determines the change of the solution after one iteration. If this change is less than a predetermined tolerance $\epsilon$, then the simulation can be terminated since the new iterations do not provide significant change. 

\begin{remark}
    The predetermined tolerance $\epsilon$ for the termination of the iterative Algorithm 1 relies heavily on $\alpha$. This stems from the fact that the stepsize $\alpha$ controls the size of the variations of the vectors $x^k$ and $z^k$ in Algorithm 1. 
\end{remark}

\subsubsection{Verification of the Solution}

It is important to verify if the solution $x^{sol}$ of Algorithm 1 is close enough to the true solution. In doing so, it is possible to feed this solution to the initial ZFP of (\ref{eq:method}). In an ideal scenario, the solution must be the zero vector of appropriate size. However due to numerical errors, certain deviations are expected. Thus, a good metric would be to check the norm of the term $\operatorname{E}(x^{sol}) + \sum\limits_{i=1}^{n}(\operatorname{F}_i(x^{sol})-\operatorname{G}_i(x^{sol}))$, where $x^{sol}$ denotes the solution vector of the iterative solver. If this norm is small, then the solution is of decent quality. In our work, we use the normalized norm $\frac{1}{L}(\operatorname{E}(x^{sol}) + \sum\limits_{i=1}^{n}(\operatorname{F}_i(x^{sol})-\operatorname{G}_i(x^{sol})))$ where $L$ is the length of $x^{sol}$ vector. 

Although this metric provides a distance to the true solution, it is not used for the purpose of termination. The justification for not using this metric for termination is purely computational. Evaluating the true solution norm at every iteration imposes a relatively expensive computational burden to every iteration of our algorithm while the alternative criterion does not.

\subsection{Illustrative Examples:}

\subsubsection{Single Spiking Neuron}

A single spiking neuron requires two timescales (instantaneous and fast). The numerical values for a spiking neuron are directly extracted from \cite{ribar2021neuromorphic} and the model is
\begin{equation}{\label{exam:spike}}
    C\operatorname{D}v + v - 2 \tanh(v) + 2 \tanh (\frac{1}{50 \operatorname{D} + \operatorname{Id}}v) - i_{ext} = 0
\end{equation}
where again $\operatorname{D}$ is the differentiation operator. With a reasonable shift of $\lambda$ we can rewrite the model as
\begin{equation}
    \begin{aligned}
        C\operatorname{D}v + (v - i_{ext}) - 2 \tanh(v) + \\ 
        (2 \tanh (\frac{1}{50 \operatorname{D} + \operatorname{Id}}v) + \lambda \operatorname{Id}) - (\lambda \operatorname{Id})  = 0
    \end{aligned}
\end{equation}
where this corresponds to $\operatorname{E} = C \operatorname{D}(\cdot)$, $\operatorname{F}_1 = \operatorname{Id} - i_{ext}$, $\operatorname{G}_1 = 2 \tanh(\cdot)$, $\operatorname{F}_2 = 2 \tanh (\frac{1}{50 \operatorname{D} + \operatorname{Id}}(\cdot)) + \lambda \operatorname{Id}$ and $\operatorname{G}_2 = \lambda \operatorname{Id}$ according to the language of (\ref{eq:method}) and Algorithm 1. Note that all the operators are monotone. $i_{ext}$ in $\operatorname{F}_1$ is nothing but an offset, a predetermined external input applied to the system to excite it.

The purpose of this example is twofold. First, it illustrates the capability of the proposed method to accurately capture the {\emph{excitability}} of spiking neurons. It delineates that small external currents only cause small perturbations (subthreshold behavior) whereas slightly larger currents result in large transient events (suprathreshold behavior), known as {\textit{spikes}}. This is the manifestation of excitability.  Figure \ref{fig:Spiking} demonstrates the excitability and rebound spiking of a simple neuron. It also shows the evolution of the solution vector at different iterations and its convergence toward the true solution, verified by numerical integration methods.

All the simulations begin at the resting potential of the neuron and also finish there. This temporal boundary condition attempts to restrict our framework to \emph{event} simulations.

\begin{figure}[h]
    \centering
    \includegraphics[width=0.4\textwidth]{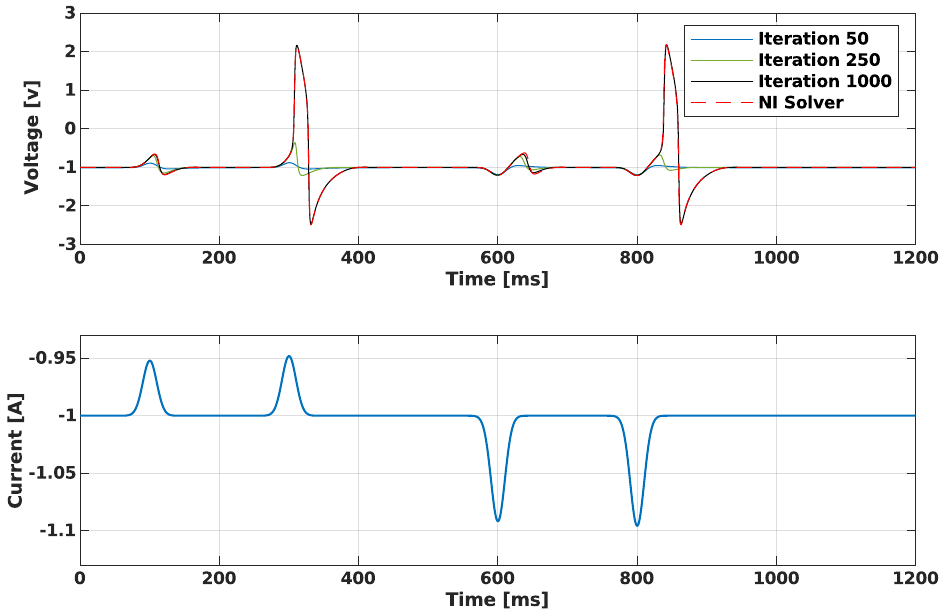} % Adjust the width as needed
    \caption{Simulation of a spiking neuron subjected to external current. The external injected current cannot excite the system if it is small and only creates small bumps in the potential. In contrast, by having a slightly larger external current, the neuron spikes. As it can be seen, the signal gets closer to the solution as the algorithm progresses and, it converges to it. These results are verified by an NI solver.}
    \label{fig:Spiking}
\end{figure}

The second purpose of this example is to demonstrate the efficiency and simplicity of {\textit{variability analysis}} through continuation methods in this framework. The operator-theoretic representation of (\ref{EQ:compact}) provides a signal-to-signal mapping of inputs to outputs. Thus, a continuous change in the inputs or model parameters manifests itself as a continuous change of the output signal. This continuity is key for neuromodulation studies \cite{lee2012neuromodulation}.

Figure \ref{fig:Var} illustrates continuation on the modulation of the maximal  conductance parameter alpha from $\operatorname{G}_1 = 2 \tanh(\cdot)$ to $\operatorname{G}_1 = 2.4 \tanh(\cdot)$ and the question is how this changes the behavior. The algorithm is first implemented to simulate the behavior of the spiking neuron with $\operatorname{G}_1 = 2 \tanh(\cdot)$. After convergence, its coefficient is changed by 0.01 increments, and at each run, the solution of the previous simulation is fed to the solver as the initial condition. It is observed that the solver converges quickly to the next solution due to this continuity.

\begin{figure}[h]
    \centering
    \includegraphics[width=0.4\textwidth]{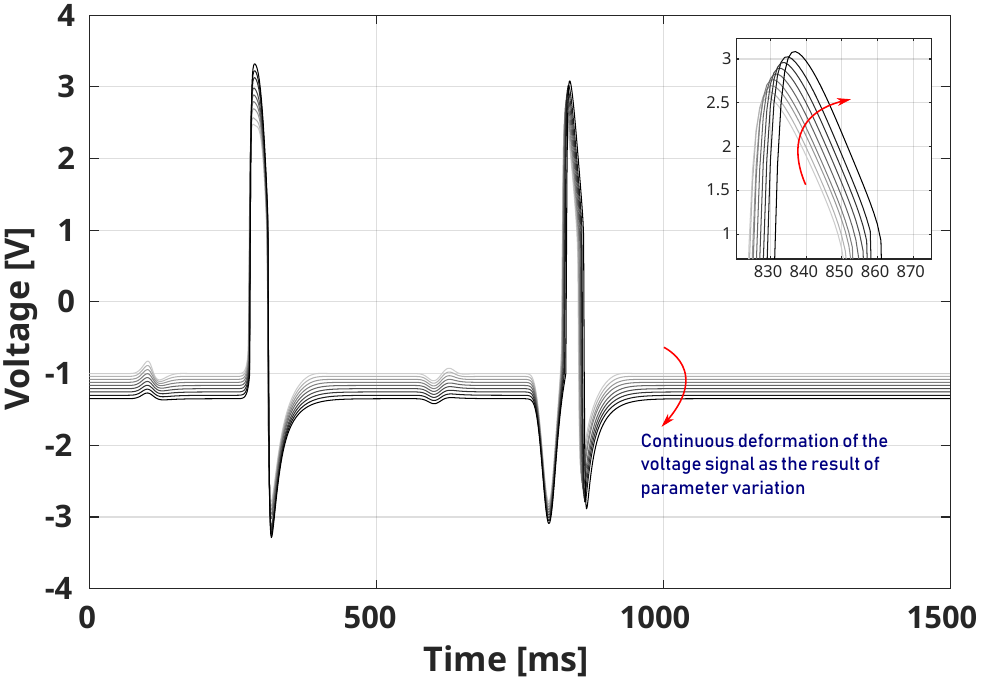} % Adjust the width as needed
    \caption{Variation of the behavior of the spiking neuron as a result of an alteration of parameters. The parameter $\alpha ^- _f$ has been changed from $-2$ to $-2.4$ where the lightest trajectory corresponds to $-2$ and the darkest represents $-2.4$. The magnified subfigure demonstrates the rise of the amplitude as the result of this variation. The arrows indicate the direction of solutions' evolutions with parameter variation.}
    \label{fig:Var}
\end{figure}

The hyperparameters of Algorithm 1 for this example are chosen within the rational bounds. Nevertheless, they are not tuned for optimal performance. Here, $\lambda = 4$ to ensure monotonicity. $\alpha = 0.5$ to have a decent stepsize for speedy convergence. The sampling frequency is set to be $F_s = 10$ for the simulation of Fig \ref{fig:Spiking} but is reduced to $F_s = 1$ to show the robustness of the algorithm with coarser resolutions in the case of the variability analysis example. The simulation time is 1200 milliseconds in the spiking simulation and 1500 milliseconds in the variability analysis run. The algorithm terminates after 1000 iterations. The interested reader is referred to the literature of convex optimization for insight, bounds and tuning of these hyperparameters \cite{bertsekas2003convex,shen2016disciplined}.

\subsubsection{Single Bursting Neuron}

The model of the bursting neuron follows the same structure but requires three timescales (instantaneous, fast and slow). The model of interest is
\begin{equation}
    \begin{aligned}
        & C\operatorname{D}v + v - 2 \tanh(v) + 2 \tanh (\frac{1}{50 \operatorname{D} + \operatorname{Id}}v)  \\
        & - 1.5 \tanh (\frac{1}{50 \operatorname{D} + \operatorname{Id}}v + 0.88) + 1.5 \tanh (\frac{1}{2500 \operatorname{D} + \operatorname{Id}}v) \\
        & - i_{ext} = 0
    \end{aligned}
\end{equation}
where the parameters are again extracted from \cite{ribar2021neuromorphic}. It is straightforward to extract $\operatorname{E}$, $\operatorname{F}_i$ and $\operatorname{G}_i$ according to the previous example. Similar procedures to the previous example are taken and, the results are demonstrated in Fig \ref{fig:Bursting}. This figure illustrates perfect agreement between the results of the proposed simulation framework and the results of a NI solver that is presented for comparison. 

The purpose of this example is to demonstrate the capability of this solver in capturing multiscale behaviors, such as bursting. The estimated solution of the operator-theoretic algorithm naturally evolves from a coarse estimate of the event at iteration 300 to a detailed simulation of the burst at iteration 6400. In fact, from early on, the existence of an event at around $t = 6000 \ ms$ can be seen and, the later iterations only refine the details. 

\begin{figure}[h]
    \centering
    \includegraphics[width=0.4\textwidth]{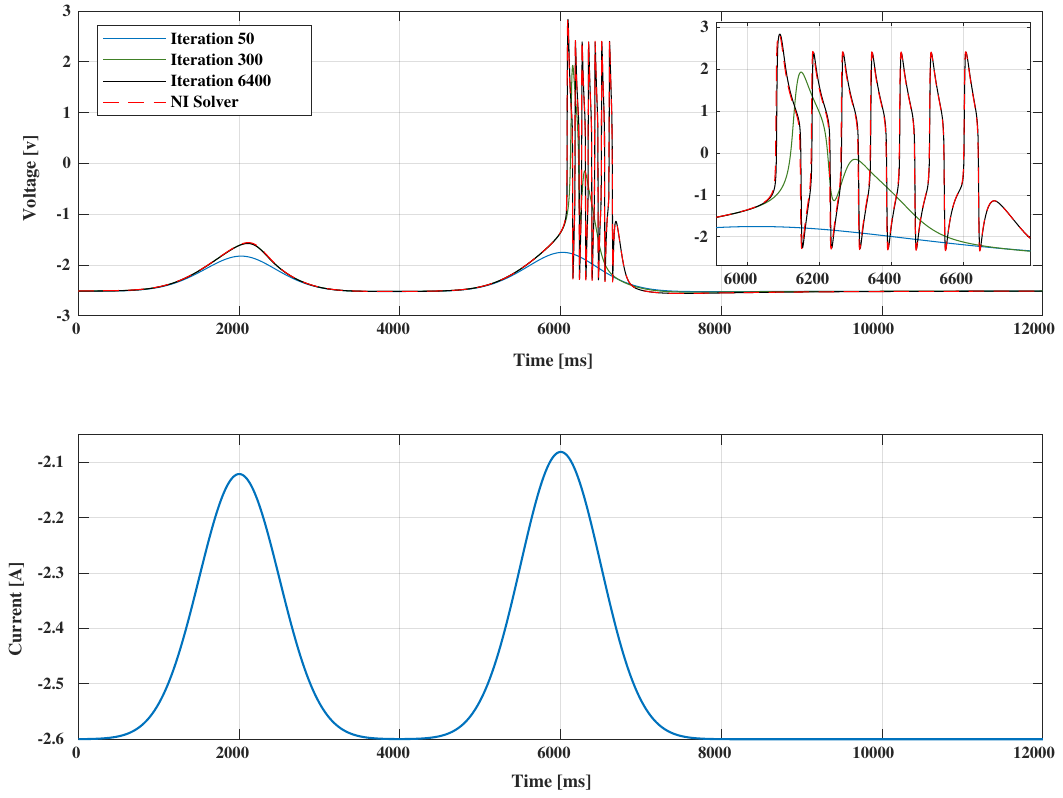} % Adjust the width as needed
    \caption{The excitability of a bursting neuron is captured using this simulation framework. The simulation begins at rest and from early on, begins to detect the occurrence of events. By continuing the iterations, the solver refines the signal further until its convergence. NI solvers are used to verify the validity of the converged solution.}
    \label{fig:Bursting}
\end{figure}

Here, $\lambda = 2$ to ensure monotonicity. $\alpha = 0.15$ to have a decent stepsize for speedy convergence. The sampling frequency is set to be $F_s = 4$ and the simulation time is 12000 milliseconds. The algorithm terminates after 7000 iterations.

\subsubsection{Events in Half-center oscillators}

The final example of this paper is concerned with simulating a simple network. Bursting neurons are interconnected through inhibitory synaptic connections. The models for neurons are identical and demonstrated as
\begin{equation}
    \begin{aligned}
        & \operatorname{N}(v) = C\operatorname{D}v + v - 2 \tanh(v) + 2 \tanh (\frac{1}{50 \operatorname{D} + \operatorname{Id}}v)  \\
        & - 1.5 \tanh (\frac{1}{50 \operatorname{D} + \operatorname{Id}}v + 0.88) + \tanh (\frac{1}{2500 \operatorname{D} + \operatorname{Id}}v + 0.88)
    \end{aligned}
\end{equation}
and the synaptic connections have the format of (\ref{EQ:SynapseOT_Def}) as
\begin{align}\label{eq:SynapseOT}
    \operatorname{I}_{12} ^{syn}(\cdot) = 0.8 \operatorname{X}_{sig} (2 (v_2 - 1))
\end{align}
and
\begin{align}\label{eq:SynapseOT22}
    \operatorname{I}_{21} ^{syn}(\cdot) = 0.8 \operatorname{X}_{sig} (2 (v_1 - 1))
\end{align}
and the externally injected currents are consistent with the numerical values of Fig. (\ref{fig:HCO}) to start from equilibria.

A detailed scenario is devised to illustrate the rich behavior of the HCO network. The example begins by applying an external inhibitory current to the first neuron, driving it into a hyperpolarized state. Once the external current is removed, the first neuron undergoes a rebound burst. This burst inhibits the second neuron via its inhibitory synaptic connection, causing a rebound burst in the second neuron as well. Two different approaches are explored to simulate the HCO network and obtain this behavior.

Given the complexity of the network's behavior, solving it with a fine resolution from the outset is computationally inefficient. A fine resolution results in a larger solution vector and corresponding operators, significantly slowing each iteration. Moreover, this brute-force method disregards our emphasis on the modularity of event-based neuromorophic behaviors. Instead, starting with a coarse resolution provides an approximate solution (that captures the \emph{occurrence} of the events) and can be used as initialization of the algorithm with fine-resolution settings. This is illustrated in Fig. \ref{fig:HCO}.

\begin{figure}[h]
    \centering
    \includegraphics[width=0.45\textwidth]{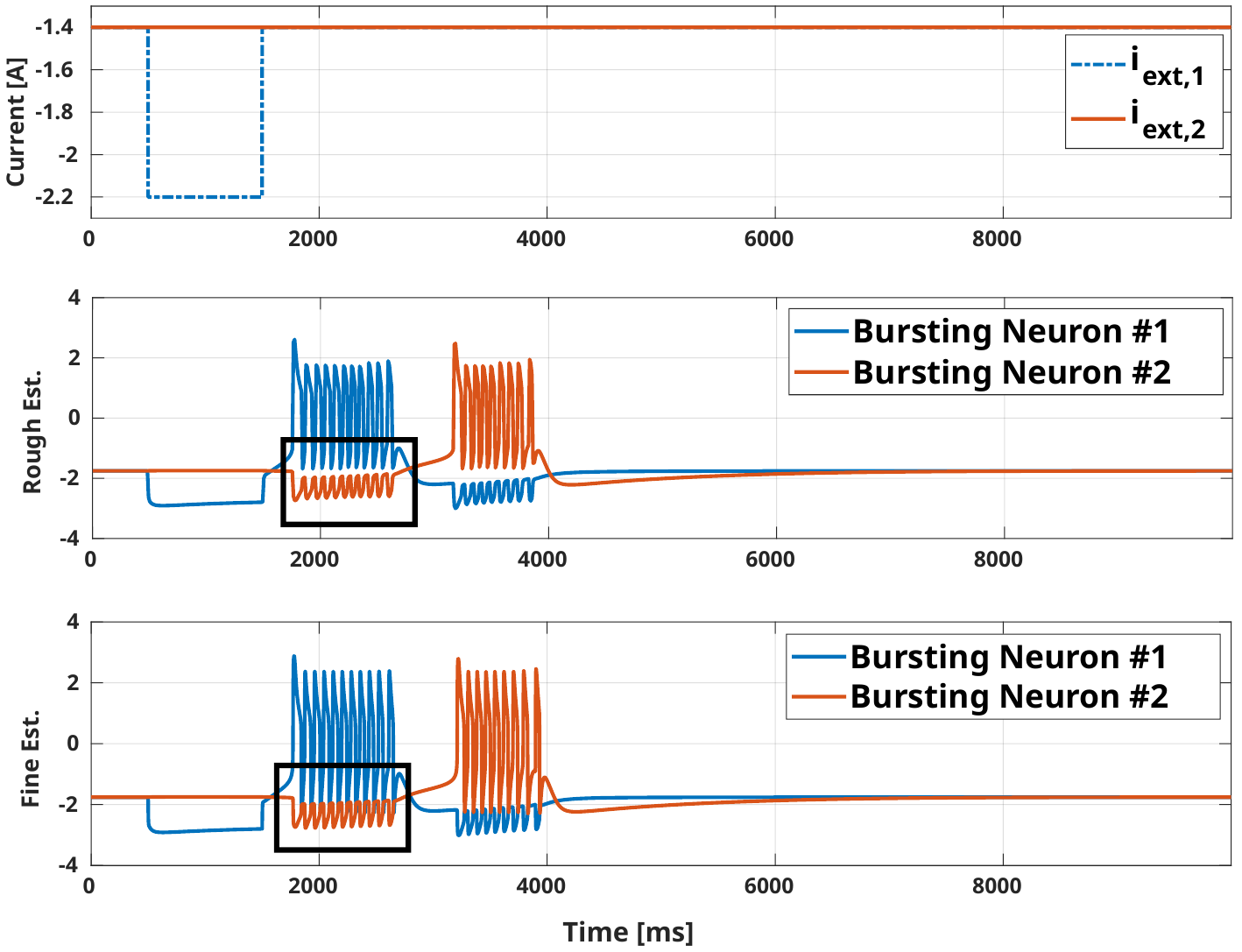} % Adjust the width as needed
    \caption{The simulation of an HCO experiencing bursting events. The top figure illustrates the injected current to both neurons. The middle figure is the rough estimation of the solution with a small sampling frequency ($F_s = 0.1 Hz$). The figure at the bottom starts with the solution of the rough estimation as an initial guess and runs the algorithm with a larger sampling frequency to fine-tune the solution ($F_s = 2 Hz$). The change in the amplitude and finer details are among the most noticeable added details, highlighted by a black rectangle.}
    \label{fig:HCO}
\end{figure}

The result of the simulation at coarse-resolution provides an interesting insight. Compared to the true solution, the coarse resolution simulation does not get the amplitude of the bursting neurons right. Nevertheless, unlike NI solvers, errors in the estimation of solutions do not propagate forward in time. This stems from the computation of the entire solution signal at one step (within Algorithm 1) and the signal-to-signal mapping nature of this operator-theoretic approach. Consequently, and despite the first burst not being precisely estimated, the rebound burst of the second neuron is captured. This demonstrates that efficient coarse resolution simulations are effective in estimating the timing and location of key events that are the most significant characteristic of event-based systems, emphasizing further on the event-focused nature of the solver discussed in the bursting example.

An alternative approach to obtaining a refined solution of spiking networks leverages the modular nature of the circuit: the network couples the events of the individual neurons, enabling a warm-start of the network simulation. This pre-existing knowledge is information on the shape and form of the individual event (spiking or bursting) that can serve as the template. To this end, the simulation is first performed on a coarse resolution and the results are used to identify the approximate locations of events. Then, the coarse resolution events are substituted with precise templates (derived from the results of single neuron simulations). This approach enforces a more accurate representation of individual events within the system.

The hyperparameters for the rough and fine simulation can be found in the code appended to this paper.

\begin{remark}
    Given the sharp nature of spikes, it is very difficult to capture their exact shape in simulation. Thus, to get the shapes of spikes exactly right, a large sampling frequency is needed. However, it is important to note that in most cases, the shape of the spike bears no significant information, and it is only the occurrence of the event that bears meaning. NI methods cannot circumvent this issue due to forward error propagations and thus, face computational bottlenecks. Nevertheless, the event-based nature of this solver evades this issue as depicted in the HCO example.
\end{remark}

As the network size increases, the number of synaptic connections grows quadratically, which is a primary factor limiting scalability in most numerical algorithms for simulating large-scale spiking networks. In the presented framework, however, the number of smaller operators grows linearly with the network size. While this mitigates some challenges, it enlarges the consensus set $ \boldsymbol{C}$ (that our distributed algorithm relies upon), slowing convergence as the network expands. Although this permits the use of highly parallel implementation, the proposed algorithm does not resolve the problem of scalability.

\section{Conclusion}

This paper proposes a novel framework to simulate neuromorphic systems. This framework relies on an operator-theoretic representation of the neuromorphic circuit, intertwined with an iterative solver. This is in contrast to the prevalent incremental numerical integration methods of the literature that do not exploit the event-based nature of the neuronal behavior (limited to spiking and bursting), do not offer any methodological approach for variability analysis, and do not permit the incorporation of prior knowledge of the system into the solver. These shortcomings were the main motivation for the operator-theoretic representation of this paper.

In the proposed framework, the operator-theoretic representation of the neuron is split into several smaller operators based on the topology of the network and its single-layered architecture. These smaller operators are used within the structure of a new difference-of-monotone solver and the solution trajectory of the network is obtained. It is shown that this method is event-based, and capable of using prior knowledge of the spiking system in its structure. Variability analysis is also efficient and straightforward via this method.

Future research will refine the proposed methodology to make it scalable to large networks. A crucial element to that end will be to revisit the model of the single elements (channels and synapses) so as to enforce a monotonicity property for each element. A step in that direction is provided in \cite{shahhosseini2025variable}.

\section{Code Availability}

The MATLAB code for every example of the simulation section can be found \href{https://github.com/AmirShahhosseini/Splitting_IEEE_TAC}{here}. The repository is purposefully developed to be accessible, readable, and reusable.

\appendices

\section{Proof of Convergence}\label{appendix}

The proof of convergence for Theorem \ref{thm:main} is presented through two steps. The first step proves the {\textit{local}} convergence of the three operator splitting algorithm on the subset ${\mathcal{D}}$ for a difference of monotone setup. The second part proves the convergence for the lifted problem for arbitrary number of operators.

\subsection{Problem Statement and Deriving the Algorithm}

Although the derivation of this 3-operator splitting might resemble the DYS setup, as shown in \cite{ryu2022large}, it differs in both its functionality and proof of convergence. The DYS algorithm was originally proposed for a fully monotone setup \cite{davis2017three} with 3 monotone smaller operators (where some smaller operators bear extra properties as well). The monotonicity of all smaller operators and the extra properties are principal in its proof of convergence. On the other hand, the proposed algorithm of this paper tackles the problem of difference of monotone ZFPs. This is an entirely different setup and affects the proof of convergence. Nevertheless, the assumption here is local monotonicity that conforms to the physics of our problem and is nearly always the case with difference of convex algorithms \cite{yuille2003concave}.

The problem of interest is to find a fixed-point iteration (FPI) algorithm that obtains the zero of the 

\begin{equation}{\label{Basic_ZFP}}
    (\operatorname{A} + \operatorname{B} - \operatorname{C})x = 0
\end{equation}
where $x$ is a signal in the Hilbert space of $\mathcal{L}^2_{[0,\infty)}$. The properties of these operators are discussed in detail within Theorem (\ref{thm:three}).

Reformulating the problem to 
\begin{equation}
    (\operatorname{Id}+\alpha \operatorname{A}) x-(\operatorname{Id}-\alpha \operatorname{B}) x-\alpha \operatorname{C} x = 0
\end{equation}
and using the definition of the reflected resolvent (Definition \ref{def:reflected}), it is possible to rewrite the expression as
\begin{equation}
    (\operatorname{Id}+\alpha \operatorname{A}) x-\operatorname{R}_{\alpha \operatorname{B}}(\operatorname{Id}+\alpha \operatorname{B}) x-\alpha \operatorname{C} x = 0
\end{equation}

By introducing the auxiliary variable $z=(\operatorname{Id}+\alpha \operatorname{B}) x $, it is possible to obtain
\begin{equation}
    (\operatorname{Id}+\alpha \operatorname{A}) x-\operatorname{R}_{\alpha \operatorname{B}} z-\alpha \operatorname{C} x = 0
\end{equation}
and reformulate this ZFP in terms of the auxiliaryvariable as
\begin{equation}
    \left(\operatorname{R}_{\alpha \operatorname{B}}+\alpha \operatorname{C J}_{\alpha \operatorname{B}}\right) z =(\operatorname{Id}+\alpha \operatorname{A}) \operatorname{J}_{\alpha \operatorname{B}} z
\end{equation}
and this can be manipulated into
\begin{equation}
    \operatorname{J}_{\alpha \operatorname{A}}\left(\operatorname{R}_{\alpha \operatorname{B}}+\alpha \operatorname{C} \operatorname{J}_{\alpha \operatorname{B}}\right) z=\operatorname{J}_{\alpha \operatorname{B}} z
\end{equation}
or more compactly,
\begin{equation}
    \left(\operatorname{R}_{\alpha \operatorname{A}}\left(\operatorname{R}_{\alpha \operatorname{B}}+\alpha \operatorname{C} \operatorname{J}_{\alpha \operatorname{B}}\right)+\alpha \operatorname{C} \operatorname{J}_{\alpha \operatorname{B}}\right) z=z
\end{equation}
and to write in an averaged formalism, it is possible to rewrite it as
\begin{equation}
    (\frac{1}{2} \operatorname{Id}+ \frac{1}{2} \operatorname{Q}) z=z
\end{equation}
where
\begin{equation}
    \operatorname{Q}=\operatorname{R}_{\alpha \operatorname{A}}\left(\operatorname{R}_{\alpha \operatorname{B}}+\alpha \operatorname{C} \operatorname{J}_{\alpha \operatorname{B}}\right)+\alpha \operatorname{C J}_{\alpha \operatorname{B}}
\end{equation}

Operator $\operatorname{T}$, defining the fixed-point iteration, is now defined using $\operatorname{Q}$ and can be expanded as 
\begin{equation}{\label{eq:def_averaged}}
    \operatorname{T} = \frac{1}{2} \operatorname{Id}+\frac{1}{2} \operatorname{Q}= \operatorname{Id}-\operatorname{J}_{\alpha \operatorname{B}}+\operatorname{J}_{\alpha \operatorname{A}}\left(2\operatorname{J}_{\alpha \operatorname{B}} - \operatorname{Id} + \alpha \operatorname{C} \operatorname{J}_{\alpha \operatorname{B}} \right)
\end{equation}
to ensure the averagedness of the fixed-point iteration's mapping. It progresses as
\begin{equation}{\label{EQ:Basic_FPI}}
    z_{k+1} = \operatorname{T}z_k
\end{equation}
where this can be written as an algorithm, purely based on the resolvent operator, as
\begin{equation}{\label{eq:algorithm}}
\begin{aligned}
& x_{k+1 / 2}=\operatorname{J}_{\alpha \operatorname{B}}\left(z_{k}\right) \\
& x_{k+1}=\operatorname{J}_{\alpha \operatorname{A}}\left(2 x_{k+1 / 2}-z_{k}+\alpha \operatorname{C} x_{k+1 / 2}\right) \\
& z_{k+1}=z_{k}+x_{k+1}-x_{k+1 / 2}
\end{aligned}
\end{equation}
which is similar to the difference of monotone Douglas-Rachford algorithm of \cite{chuang2022unified}, derived and studied for when the operators are the gradients of convex functions, imposing the extra condition of cyclic monotonicity.

\subsection{Proof of Convergence for Three Operators}

We begin by recalling the Krasnosel'skiĭ-Mann Theorem.
\begin{proposition}{\label{prop1}}
    Assume a $\lambda$-averaged operator $\operatorname{T}: \mathcal{D} \rightarrow \mathcal{D}$, where $\mathcal{D}$ is closed, convex and non-empty subset of $\mathcal{H}$ and $\text{Fix} \operatorname{T} \neq \varnothing$. The fixed-point iteration of
\begin{equation}{\label{eq:proposition}}
    x_{n+1} = \operatorname{T} x_n
\end{equation}
generates a sequence $(x_n)_{_{n \in \mathbb{N}}}$ that
converges weakly to $x^* \in \text{Fix} \operatorname{T}$.
\end{proposition}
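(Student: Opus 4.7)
The plan is to follow the classical three-step approach for iterations of averaged operators: establish Fej\'er monotonicity of $(x_n)$ with respect to $\text{Fix}\operatorname{T}$, deduce asymptotic regularity (i.e.\ $\operatorname{T} x_n - x_n \to 0$ strongly), and then combine these with the demiclosedness of $\operatorname{Id}-\operatorname{N}$ and Opial's lemma to upgrade to weak convergence.

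First I would use the definition of $\lambda$-averagedness to write $\operatorname{T} = (1-\lambda)\operatorname{Id} + \lambda \operatorname{N}$ with $\operatorname{N}:\mathcal{D}\to\mathcal{D}$ nonexpansive and $\text{Fix}\operatorname{N}=\text{Fix}\operatorname{T}$. Picking any $p\in\text{Fix}\operatorname{T}$ and applying the Hilbert-space identity $\|(1-\lambda)a+\lambda b\|^2 = (1-\lambda)\|a\|^2+\lambda\|b\|^2-\lambda(1-\lambda)\|a-b\|^2$ with $a=x_n-p$ and $b=\operatorname{N} x_n-p$, together with $\|\operatorname{N} x_n - p\|\le\|x_n-p\|$, yields
\begin{equation}
    \|x_{n+1}-p\|^2 \le \|x_n-p\|^2 - \lambda(1-\lambda)\,\|\operatorname{N} x_n-x_n\|^2.
\end{equation}
This single inequality does double duty: it shows that $(x_n)$ is Fej\'er monotone with respect to $\text{Fix}\operatorname{T}$ (so bounded, and $\|x_n-p\|$ converges for every such $p$), and telescoping the residual terms over $n$ gives square-summability and hence $\|\operatorname{N} x_n-x_n\|\to 0$, which is equivalent to $\operatorname{T} x_n - x_n \to 0$.

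The main obstacle is passing from boundedness plus asymptotic regularity to the conclusion that every weak sequential cluster point of $(x_n)$ lies in $\text{Fix}\operatorname{T}$. This is supplied by Browder's demiclosedness principle for nonexpansive operators: if $x_{n_k}\rightharpoonup\bar x$ in $\mathcal{D}$ and $(\operatorname{Id}-\operatorname{N})x_{n_k}\to 0$ strongly, then $(\operatorname{Id}-\operatorname{N})\bar x = 0$. This is where the Hilbert-space geometry is genuinely used (via the parallelogram identity and weak lower semicontinuity of the norm); I would cite it as a standard result rather than reprove it. Closedness and convexity of $\mathcal{D}$ are needed exactly here to guarantee $\bar x\in\mathcal{D}$.

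Finally, with Fej\'er monotonicity with respect to $\text{Fix}\operatorname{T}$ and all weak cluster points inside $\text{Fix}\operatorname{T}$ in hand, Opial's lemma asserts uniqueness of the weak cluster point and therefore weak convergence of the whole sequence to some $x^\ast\in\text{Fix}\operatorname{T}$, which is exactly the claim. Strong convergence is not claimed and indeed cannot be expected in infinite dimensions without additional hypotheses (e.g.\ compactness of $\operatorname{N}$ or strong monotonicity), so stopping at weak convergence is both natural and tight.
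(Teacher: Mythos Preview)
Your proposal is correct and is the standard self-contained proof of the Krasnosel'ski\u{\i}--Mann theorem. The paper, by contrast, does not reproduce any of this: it simply rewrites $\operatorname{T}$ as $(1-\lambda)\operatorname{Id}+\lambda\operatorname{N}$ with $\operatorname{N}$ nonexpansive (exactly your first step) and then defers entirely to \cite[Thm.~5.14]{bauschke_convex_2011}. The underlying argument in that reference is essentially the one you sketched---Fej\'er monotonicity, asymptotic regularity, demiclosedness, and Opial---so the mathematical content is the same; the difference is only that you spell it out while the paper treats it as a black-box citation. Your version buys self-containment and makes clear where each hypothesis (closed convex $\mathcal{D}$, Hilbert structure) is actually used; the paper's version buys brevity.
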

\begin{proof}
    Set $N = (1/\lambda)(\operatorname{T} + (\lambda - 1)\operatorname{Id})$ in \cite[Thm. 5.14]{bauschke_convex_2011}.
\end{proof}

\begin{theorem}{\label{thm:three}}
    Let ${\mathcal{D}}$ be a closed, convex and non-empty subset of $\mathcal{H}$, $\alpha > 0$ and let $\operatorname{A}$, $\operatorname{B}$ and $\operatorname{C} : \mathcal{H} \xrightarrow{} \mathcal{H}$ satisfy the assumptions:

    \begin{enumerate}
        \item Operator $\operatorname{A}$ is $\rho$-strongly monotone on $\mathcal{D}$
        \item Operator  $\operatorname{B}$ is $\gamma$-strongly monotone on $\mathcal{D}$
        \item Operator $\operatorname{C}$ satisfies
        \begin{equation}
             (\beta + \frac{1}{\epsilon})\norm{y - \hat{y}}^2 \geq \ip{\operatorname{C}y - \operatorname{C}\hat{y}}{y-\hat{y}} \geq \beta \norm{y - \hat{y}}^2
        \end{equation}
        for all $y, \hat y \in \mathcal{D}$.
        \item Operator $\operatorname{T}$, defined by (\ref{eq:def_averaged}), satisfies $ \operatorname{T}(\mathcal{D}) \subseteq \mathcal{D}$
        \item $\gamma > \beta + 1 + \frac{2}{\epsilon}$ 
        \item $\text{Fix} \operatorname{T} \neq \emptyset$
    \end{enumerate}
    then, if $0 \leq \alpha \leq \frac{2}{\beta^2}$, the operator $\operatorname{T}$ is $\theta$-averaged with $\theta = \frac{2}{4 - \alpha\beta^2}$ and the FPI \eqref{eq:proposition} converges weakly to $x^* \in \operatorname{Fix} T$.
\end{theorem}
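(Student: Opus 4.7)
The plan is to verify the two conclusions of the theorem in the order they are stated: first show that $\operatorname{T}$ is $\theta$-averaged on $\mathcal{D}$ with $\theta = 2/(4-\alpha\beta^2)$, and then invoke the Krasnosel'ski\u{\i}--Mann result already recorded as Proposition~\ref{prop1}. Since assumption~4 guarantees $\operatorname{T}(\mathcal{D})\subseteq\mathcal{D}$ and assumption~6 gives $\operatorname{Fix}\operatorname{T}\neq\varnothing$, averagedness is the only missing ingredient for convergence; the remaining identification of a fixed point of $\operatorname{T}$ with a zero of $\operatorname{A}+\operatorname{B}-\operatorname{C}$ is already carried out algebraically in the lines preceding the statement.

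To establish averagedness, I would work directly with $\operatorname{T} = \operatorname{Id} - \operatorname{J}_{\alpha \operatorname{B}} + \operatorname{J}_{\alpha \operatorname{A}}(2\operatorname{J}_{\alpha \operatorname{B}} - \operatorname{Id} + \alpha \operatorname{C}\operatorname{J}_{\alpha \operatorname{B}})$ and estimate $\|\operatorname{T} z - \operatorname{T}\hat{z}\|^2$ through the standard expansion $\|\operatorname{T} z-\operatorname{T}\hat z\|^2 = \|z-\hat z\|^2 - 2\langle z-\hat z,(\operatorname{Id}-\operatorname{T})z - (\operatorname{Id}-\operatorname{T})\hat z\rangle + \|(\operatorname{Id}-\operatorname{T})z - (\operatorname{Id}-\operatorname{T})\hat z\|^2$, where $\theta$-averagedness is equivalent to bounding the cross term from below by $(1/\theta)\|(\operatorname{Id}-\operatorname{T})z-(\operatorname{Id}-\operatorname{T})\hat z\|^2$. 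The quantitative engine is the upgraded firm non-expansiveness of the two resolvents: $\rho$-strong monotonicity of $\operatorname{A}$ gives $\|\operatorname{J}_{\alpha \operatorname{A}}u - \operatorname{J}_{\alpha \operatorname{A}}\hat u\|^2 \le (1+\alpha\rho)^{-1}\langle u-\hat u,\operatorname{J}_{\alpha \operatorname{A}}u - \operatorname{J}_{\alpha \operatorname{A}}\hat u\rangle$, and analogously for $\operatorname{B}$ with constant $\gamma$.

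Next, the $\operatorname{C}$-contribution has to be absorbed. I would use the two-sided bound on $\operatorname{C}$ in the form $\beta\|\operatorname{J}_{\alpha \operatorname{B}}z - \operatorname{J}_{\alpha \operatorname{B}}\hat z\|^2 \le \langle \operatorname{C}\operatorname{J}_{\alpha \operatorname{B}}z - \operatorname{C}\operatorname{J}_{\alpha \operatorname{B}}\hat z,\operatorname{J}_{\alpha \operatorname{B}}z-\operatorname{J}_{\alpha \operatorname{B}}\hat z\rangle \le (\beta+1/\epsilon)\|\operatorname{J}_{\alpha \operatorname{B}}z - \operatorname{J}_{\alpha \operatorname{B}}\hat z\|^2$ twice: once to dominate the $+\alpha\operatorname{C}\operatorname{J}_{\alpha \operatorname{B}}$ term appearing inside the outer resolvent, and once to compensate for the sign reversal between the monotone $\operatorname{A}+\operatorname{B}$ part and the anti-monotone $-\operatorname{C}$ part. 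The role of assumption~5, $\gamma > \beta + 1 + 2/\epsilon$, is precisely to ensure that the contractive gain produced by $\gamma$ overcomes both the upper and lower controls on $\operatorname{C}$ with enough slack left for the remaining identity component, so that the final quadratic inequality has a strictly positive averaged coefficient.

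The main obstacle I expect is bookkeeping: collecting cross terms and quadratic terms so that the resulting estimate reads exactly $\|\operatorname{T} z - \operatorname{T}\hat z\|^2 \le \|z-\hat z\|^2 - \frac{1-\theta}{\theta}\|(\operatorname{Id}-\operatorname{T})z-(\operatorname{Id}-\operatorname{T})\hat z\|^2$ with $\theta = 2/(4-\alpha\beta^2)$. The fact that $\alpha\beta^2$ appears in the denominator strongly suggests that the tight inequality emerges from completing the square on $\alpha\langle \operatorname{C}\operatorname{J}_{\alpha \operatorname{B}}z - \operatorname{C}\operatorname{J}_{\alpha \operatorname{B}}\hat z,\,\cdot\rangle$, yielding a quadratic in $\alpha$ whose admissibility range reproduces the stated bound $0 \le \alpha \le 2/\beta^2$. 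Once this $\theta$-averaged inequality is in place on $\mathcal{D}$, Proposition~\ref{prop1} applied to the restricted self-map $\operatorname{T}:\mathcal{D}\to\mathcal{D}$ delivers weak convergence of the Picard iterates $z_{k+1} = \operatorname{T} z_k$ to some $z^{\star}\in\operatorname{Fix}\operatorname{T}$, and the auxiliary substitution $x^\star = \operatorname{J}_{\alpha \operatorname{B}} z^\star$ recovers the desired zero of $\operatorname{A}+\operatorname{B}-\operatorname{C}$.
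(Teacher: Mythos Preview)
Your proposal is correct and tracks the paper's argument closely: both reduce the claim to verifying the $\theta$-averagedness inequality $\|\operatorname{T}z-\operatorname{T}\hat z\|^2 \le \|z-\hat z\|^2 - \tfrac{1-\theta}{\theta}\|(\operatorname{Id}-\operatorname{T})z-(\operatorname{Id}-\operatorname{T})\hat z\|^2$ and then invoke Proposition~\ref{prop1}. The only organisational difference is that the paper introduces the intermediate iterates $x_{1/2}=\operatorname{J}_{\alpha\operatorname{B}}z_0$, $x_1=\operatorname{J}_{\alpha\operatorname{A}}(2x_{1/2}-z_0+\alpha\operatorname{C}x_{1/2})$ and works with the subgradient identities $\alpha\operatorname{A}x_1 = 2x_{1/2}-z_0+\alpha\operatorname{C}x_{1/2}-x_1$, $\alpha\operatorname{B}x_{1/2}=z_0-x_{1/2}$ directly (splitting the residual into two pieces $P_1$, $P_2$ and completing the square on $P_2$), rather than phrasing the same information as the strengthened firm-nonexpansiveness bounds you quote; the two formulations are equivalent and lead to the same constants.
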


{\textit{Proof:}} By proposition \ref{prop1}, this theorem holds if the operator $\operatorname{T}$ is averaged.

Given (\ref{eq:algorithm}), the first iteration of the algorithm is
\begin{equation}{\label{eq:algorithm1}}
\begin{aligned}
& x_{1 / 2}=\operatorname{J}_{\alpha \operatorname{B}}\left(z_{0}\right) \\
& x_{1}=\operatorname{J}_{\alpha \operatorname{A}}\left(2 x_{1 / 2}-z_{0}+\alpha \operatorname{C} x_{1 / 2}\right) \\
& z_{1}=z_{0}+x_{1}-x_{1 / 2}
\end{aligned}
\end{equation}

From \cite[Lem. 5, Ch. 13]{ryu2022large}, averagedness of $\operatorname{T}$ is equivalent to
\begin{equation}
    \small
    \norm{\operatorname{T}z_0 - \operatorname{T}\hat{z}_0}^2 \leq \norm{z_0 - \hat{z}_0}^2 - \frac{1-\theta}{\theta} \norm{\operatorname{T}z_0 - z_0 - \operatorname{T}\hat{z}_0 + \hat{z}_0}^2
\end{equation}
where this can be rewritten using (\ref{EQ:Basic_FPI}) to 
\begin{equation}{\label{eq:inequality}}
    \norm{z_1 - \hat{z}_1}^2 \leq \norm{z_0 - \hat{z}_0}^2 - \frac{1-\theta}{\theta} \norm{z_1 - z_0 - \hat{z}_1 + \hat{z}_0}^2.
\end{equation}

The proof now consists of showing the inequality of (\ref{eq:inequality}) holds given the assumptions. It is possible to rewrite the LHS of (\ref{eq:inequality}) as
\begin{equation}
    \begin{aligned}
    & \norm{z_1 - \hat{z}_1}^2 = \norm{{z}_0 + ({z}_1 - {z}_0) - (\hat{z}_0 + (\hat{z}_1 - \hat{z}_0))}^2  \\ 
    & =\norm{z_0 - \hat{z}_0}^2  +\norm{(z_1 - z_0) - (\hat{z}_1 - \hat{z}_0)}^2 \\
    & + 2 \ip{z_0 - \hat{z}_0}{(z_1 - z_0) - (\hat{z}_1 - \hat{z}_0)} 
    \\
    & = \norm{z_0 - \hat{z}_0}^2  -\frac{1-\theta}{\theta}\norm{(z_1 - z_0) - (\hat{z}_1 - \hat{z}_0)}^2 + \\ 
    & \frac{1}{\theta}\norm{(z_1 - z_0) - (\hat{z}_1 - \hat{z}_0)}^2 \\
    & + 2 \ip{z_0 - \hat{z}_0}{(z_1 - z_0) - (\hat{z}_1 - \hat{z}_0)}
    \end{aligned}
\end{equation}

and now, all that needs to be shown is 

\begin{equation}{\label{eq:main}}
    \begin{aligned}
    & \frac{1}{\theta}\norm{(z_1 - z_0) - (\hat{z}_1 - \hat{z}_0)}^2 \\ 
    & + 2 \ip{z_0 - \hat{z}_0}{(z_1 - z_0) - (\hat{z}_1 - \hat{z}_0)} \leq 0
    \end{aligned}
\end{equation}

for (\ref{eq:inequality}) to hold. We set $\theta = \frac{2}{4 - \alpha \beta^2}$ and note that the assumption $0 \leq \alpha \leq \frac{2}{\beta^2}$ guarantees $0 \leq \theta \leq 1$. We now rewrite (\ref{eq:main}) as follows and show that for this choice of $\theta$, the inequality 
\begin{equation}
    \begin{aligned}
    &(2 - \frac{\alpha \beta^2}{2})\norm{(z_1 - z_0) - (\hat{z}_1 - \hat{z}_0)}^2 \\
    & + 2 \ip{z_0 - \hat{z}_0}{(z_1 - z_0) - (\hat{z}_1 - \hat{z}_0)} \leq 0,
    \end{aligned}
\end{equation}
holds. Using the last relation of (\ref{eq:algorithm1}), this can be rewritten as
\begin{equation}
    \begin{aligned}
    & (2 - \frac{\alpha \beta^2}{2})\norm{(x_1 - x_{1 / 2}) - (\hat{x}_1 - \hat{x}_{1/2})}^2 + \\
    & 2 \ip{z_0 - \hat{z}_0}{(x_1 - x_{1 / 2}) - (\hat{x}_1 - \hat{x}_{1/2})} \leq 0
    \end{aligned}
\end{equation}
where this can be rewritten as
\begin{equation}{\label{eq:long}}
\begin{aligned}
    & (2 - \frac{\alpha \beta^2}{2}) \norm{(x_1 - x_{1 / 2}) - (\hat{x}_1 - \hat{x}_{1/2})}^2 + \\
    & 2 \ip{z_0 - \hat{z}_0 - \alpha (\operatorname{C}{x}_{1/2} - \operatorname{C}\hat{x}_{1/2})}{(x_1 - x_{1 / 2}) - (\hat{x}_1 - \hat{x}_{1/2})} \\
    & + 2 \alpha \ip{\operatorname{C}{x}_{1/2} - \operatorname{C}\hat{x}_{1/2}}{(x_1 - x_{1 / 2}) - (\hat{x}_1 - \hat{x}_{1/2})}  + \\
    & \frac{2 \alpha}{\beta^2} \norm{\operatorname{C}{x}_{1/2} - \operatorname{C}\hat{x}_{1/2}}^2 - \frac{2 \alpha}{\beta^2} \norm{\operatorname{C}{x}_{1/2} - \operatorname{C}\hat{x}_{1/2}}^2 \leq 0
    \end{aligned}
\end{equation}

Applying the Cauchy-Schwarz inequality and $\beta$-strong monotonicity of $\operatorname{C}$, gives
%\begin{equation}
%    \norm{\operatorname{C}{x}_{1/2} - \operatorname{C}\hat{x}_{1/2}} \norm{{x}_{1/2} -\hat{x}_{1/2}} \geq \beta \norm{{x}_{1/2} -\hat{x}_{1/2}}^2
%\end{equation}
%or more conveniently,
\begin{equation}
    \norm{\operatorname{C}{x}_{1/2} - \operatorname{C}\hat{x}_{1/2}}  \geq \beta \norm{{x}_{1/2} -\hat{x}_{1/2}}
\end{equation}
From this, $2 \alpha \norm{{x}_{1/2} -\hat{x}_{1/2}}^2 - \frac{2 \alpha}{\beta ^2}  \norm{\operatorname{C}{x}_{1/2} - \operatorname{C}\hat{x}_{1/2}}^2 \leq 0$
and thus (\ref{eq:long}) can be rewritten as
\begin{equation}{\label{eq:proof}}
\begin{aligned}
    & (2 - \frac{\alpha \beta^2}{2}) \norm{(x_1 - x_{1 / 2}) - (\hat{x}_1 - \hat{x}_{1/2})}^2 \\
    & + 2 \ip{z_0 - \hat{z}_0 - \alpha (\operatorname{C}{x}_{1/2} - \operatorname{C}\hat{x}_{1/2})}{(x_1 - x_{1 / 2}) - (\hat{x}_1 - \hat{x}_{1/2})} \\
    & + 2 \alpha \ip{\operatorname{C}{x}_{1/2} - \operatorname{C}\hat{x}_{1/2}}{(x_1 - x_{1 / 2}) - (\hat{x}_1 - \hat{x}_{1/2})}  \\
    & + 2 \alpha \norm{{x}_{1/2} -\hat{x}_{1/2}}^2 - \frac{2 \alpha}{\beta^2} \norm{\operatorname{C}{x}_{1/2} - \operatorname{C}\hat{x}_{1/2}}^2 \leq 0
    \end{aligned}
\end{equation}

%This is the first {\textit{{conservative}} step of our approach. 
It is now possible to break (\ref{eq:proof}) into two parts and prove separately that each part is negative. It will be proven that

\begin{equation}{\label{eq:p1}}
\begin{aligned}
    &P_1 = 2 \norm{(x_1 - x_{1 / 2}) - (\hat{x}_1 - \hat{x}_{1/2})}^2
    +  2 \alpha \norm{{x}_{1/2} -\hat{x}_{1/2}}^2 + \\
    &2 \ip{z_0 - \hat{z}_0 - \alpha (\operatorname{C}{x}_{1/2} - \operatorname{C}\hat{x}_{1/2})}{(x_1 - x_{1 / 2}) - (\hat{x}_1 - \hat{x}_{1/2})}
    \end{aligned}
\end{equation}
and 
\begin{equation}{\label{eq:p2}}
\begin{aligned}
    &P_2 = - \frac{\alpha \beta^2}{2} \norm{(x_1 - x_{1 / 2}) - (\hat{x}_1 - \hat{x}_{1/2})}^2 \\
    & + 2 \alpha \ip{\operatorname{C}{x}_{1/2} -  \operatorname{C}\hat{x}_{1/2}}{(x_1 - x_{1 / 2}) - (\hat{x}_1 - \hat{x}_{1/2})} \\
    & - \frac{2 \alpha}{\beta^2} \norm{\operatorname{C}{x}_{1/2} - \operatorname{C}\hat{x}_{1/2}}^2
    \end{aligned}
\end{equation}
are nonpositive.

First, we begin with $P_2$ and prove that it is indeed always negative. We begin with factorizing by $-\frac{\alpha}{2}$ and 
\begin{equation}
\begin{aligned}
    &P_2 = -\frac{\alpha}{2}[ \beta^2 \norm{(x_1 - x_{1 / 2}) - (\hat{x}_1 - \hat{x}_{1/2})}^2 \\
    & -4 \ip{\operatorname{C}{x}_{1/2} -  \operatorname{C}\hat{x}_{1/2}}{(x_1 - x_{1 / 2}) - (\hat{x}_1 - \hat{x}_{1/2})} \\
    & + \frac{4}{\beta^2} \norm{\operatorname{C}{x}_{1/2} - \operatorname{C}\hat{x}_{1/2}}^2  ] \leq 0
    \end{aligned}
\end{equation}
where this can be reformulated to

\begin{equation}
    \begin{aligned}
    &V_1 = \beta ((x_{1 / 2} - \hat{x}_{1/2}) - ({x}_1 - \hat{x}_1)), \\
    &V_2 = \frac{2}{\beta}(\operatorname{C}{x}_{1/2} - \operatorname{C}\hat{x}_{1/2}), \\
    &P_2 = -\frac{\alpha}{2} ||V_1 + V_2 ||^2 \leq 0
    \end{aligned}
\end{equation}

The next step is to prove that the terms of (\ref{eq:p1}) are also negative. %Then, the averagedness of the operator $\operatorname{T}$ is proven and the derived algorithm must converge to a {\textit{local}} fixed-point, should one exist.
Factorizing by $-2$ gives
\begin{equation}{\label{eq:midexpanded}}
\small
\begin{aligned}
    &P_1 = -2 (-\norm{(x_1 - x_{1 / 2}) - (\hat{x}_1 - \hat{x}_{1/2})}^2 
    - \alpha \norm{{x}_{1/2} -\hat{x}_{1/2}}^2 \\
    &- \ip{z_0 - \hat{z}_0 - \alpha (\operatorname{C}{x}_{1/2} - \operatorname{C}\hat{x}_{1/2})}{(x_1 - x_{1 / 2}) - (\hat{x}_1 - \hat{x}_{1/2})})
\end{aligned}
\end{equation}
is nonpositive. Thus, 
\begin{equation}{\label{eq:expanded}}
    \begin{aligned}
    &P_1 = -2 (\ip{-(x_1 - \hat{x}_1)}{(x_1 - \hat{x}_1)} \\
    & + \ip{-(x_{1/2} - \hat{x}_{1/2})}{(x_{1/2} - \hat{x}_{1/2})} + \\
    & 2 \ip{x_1 - \hat{x}_1}{x_{1/2} - \hat{x}_{1/2}} - \alpha \norm{{x}_{1/2} -\hat{x}_{1/2}}^2 \\
    & - \ip{z_0-\hat{z}_0}{x_1 - \hat{x}_1} + \\
    & \ip{z_0-\hat{z}_0}{x_{1/2} - \hat{x}_{1/2}} + \ip{\alpha(\operatorname{C}{x}_{1/2} - \operatorname{C}\hat{x}_{1/2})}{x_1 - \hat{x}_1} - \\
    & \ip{\alpha(\operatorname{C}{x}_{1/2} - \operatorname{C}\hat{x}_{1/2})}{{x}_{1/2} -\hat{x}_{1/2}}) \leq 0
    \end{aligned}
\end{equation}

Bear in mind that the terms of (\ref{eq:algorithm}) can be rewritten as

\begin{equation}{\label{eq:temp}}
    \begin{aligned}
    &\operatorname{A}x_1 = \frac{1}{\alpha}(2x_{1/2} - z_0 + \alpha \operatorname{C}x_{1/2} - x_1) \\
    &\operatorname{B}x_{1/2} = \frac{1}{\alpha} (z_0 - x_{1/2})
    \end{aligned}
\end{equation}

Thus, (\ref{eq:expanded}) can be reformulated to resemble that of (\ref{eq:temp}) so the properties of the existing operators, on the subset $\mathcal{D}$ can be used. To this end, we can rewrite (\ref{eq:expanded}) as
\begin{equation}
    \begin{aligned}
    &-2 (\alpha \ip{\operatorname{A}x_1 - \operatorname{A}\hat{x}_1}{x_1 - \hat{x}_1} \\
    & + \alpha \ip{\operatorname{B}x_{1/2} - \operatorname{B}\hat{x}_{1/2}}{x_{1/2} - \hat{x}_{1/2}} \\
    &- \alpha (\ip{\operatorname{C}x_{1/2} - \operatorname{C}\hat{x}_{1/2}}{x_{1/2} - \hat{x}_{1/2}} + \norm{x_{1/2} - \hat{x}_{1/2}}^2)) \leq 0
    \end{aligned}
\end{equation}
where the first two terms are negative but the third term is not.  We then have
%This means that the operator $\operatorname{C}$ is what determines the bounds of the averagedness of our operator $\operatorname{T}$ which is compatible with reality. This leads to 
\begin{equation}
    \begin{aligned}
    & -2 \alpha (\ip{\operatorname{A}x_1 - \operatorname{A}\hat{x}_1}{x_1 - \hat{x}_1} \\
    & + \ip{\operatorname{B}x_{1/2} - \operatorname{B}\hat{x}_{1/2}}{x_{1/2} - \hat{x}_{1/2}} \\
    & +(\ip{\operatorname{C}x_{1/2} - \operatorname{C}\hat{x}_{1/2}}{x_{1/2} - \hat{x}_{1/2}} -\beta \norm{x_{1/2} - \hat{x}_{1/2}}^2)  \\
    & + (\beta - 1)\norm{x_{1/2} - \hat{x}_{1/2}}^2 \\
    & - 2  \ip{\operatorname{C}x_{1/2} - \operatorname{C}\hat{x}_{1/2}}{x_{1/2} - \hat{x}_{1/2}}) \leq 0
    \end{aligned}
\end{equation}
and use the strong monotonicity of operators $\operatorname{A}$, $\operatorname{B}$ and $\operatorname{C}$ on subset $\mathcal{D}$ for further insight. %With another {\textit{conservative}} step,
\begin{equation}
    \begin{aligned}
    & -2 \alpha(\rho \norm{x_1 - \hat{x}_1}^2 + \gamma \norm{{x_{1/2} - \hat{x}_{1/2}}}^2 \\
    & + (\beta - 1)\norm{x_{1/2} - \hat{x}_{1/2}}^2 \\
    & - 2  \ip{\operatorname{C}x_{1/2} - \operatorname{C}\hat{x}_{1/2}}{x_{1/2} - \hat{x}_{1/2}}) \leq 0
    \end{aligned}
\end{equation}
An upper bound for the last inner product is then given by
%\begin{equation}
%    \begin{aligned}
%    & -2 \alpha (\rho \norm{x_1 - \hat{x}_1}^2 + \gamma \norm{x_{1/2} - \hat{x}_{1/2}}^2 \\
%    & + (\beta - 1)\norm{x_{1/2} - \hat{x}_{1/2}}^2 - 2  (\beta + \frac{1}{\epsilon}) \norm{x_{1/2} - \hat{x}_{1/2}}^2) \leq 0
%    \end{aligned}
%\end{equation}
%or more compactly
\begin{equation}
    \begin{aligned}{\label{eq:losslessness}}
    & -2 \alpha(\rho \norm{x_1 - \hat{x}_1}^2 + (\gamma - \beta -1 -\frac{2}{\epsilon} )\norm{x_{1/2} - \hat{x}_{1/2}}^2) \leq 0
    \end{aligned}
\end{equation}

And the final inequality holds due to the assumption 5 of the Theorem \ref{thm:three}.

\begin{remark}
    The monotonicity of the operators within the subset $\mathcal{D}$ might be seen as a strange or conservative requirement. Nevertheless, this is completely compatible with the physics of the problem that necessitates {\textit{local monotonicity}}. In this problem, spiking systems are physical systems, with a limited supply of internal energies. Thus, the energy landscape of the problem can be seen as a convex functional with bumps caused by the anti-monotone elements (with concave energy landscapes) that locally break this otherwise convex landscape.
\end{remark}

\begin{remark}
Strong monotonicity is a considerable assumption in the context of monotone zero finding problems (ZFP). However, this is not the case in difference of monotone ZFPs. This is because it is possible to add and subtract terms that make each operator strongly monotone. This is not possible in the case of monotone ZFPS since the addition of any strongly monotone term requires its subtraction (so the problem remains unchanged) and thus, an anti-monotone term is introduced that affects the monotone landscape of the aforementioned ZFP.
\end{remark}

\subsection{Proof of Convergence for the Lifted Algorithm}

The second step provides an extension from the three operator splitting algorithm to arbitrary number of operators, represented as
\begin{equation}{\label{eq:method_app}}
    \operatorname{E}(x) + \sum\limits_{i=1}^{M}(\operatorname{F}_i(x)-\operatorname{G}_i(x)) = 0
\end{equation}
The normal cone operator $\operatorname{N}_{\boldsymbol{C}}$ of the consensus set  $\boldsymbol{C}$ is defined by
\begin{equation}{\label{eq:normal_cone}}
    \begin{aligned}
        & \operatorname{N}_{\boldsymbol{C}} \boldsymbol{x}= \begin{cases}\left\{\boldsymbol{u} \in \boldsymbol{\mathcal{H}} \mid \sum_{i \in I} u_{i}=0\right\}, & \text { if } \boldsymbol{x} \in \boldsymbol{C} ; \\ \varnothing, & \text { otherwise. }\end{cases}
    \end{aligned}
\end{equation}

We now state a technical lemma.
\begin{lemma}{\label{lemma}}
    Given an operator $\boldsymbol{\operatorname{A}}$ in the lifted product space $\boldsymbol{\mathcal{H}} = \bigoplus_{i \in \{1,...,M\}} \mathcal{H}$ and defined as $\boldsymbol{\operatorname{A}} = [\operatorname{A}, \cdots, \operatorname{A}]^T$ where $\operatorname{A} : \mathcal{H} \xrightarrow{} \mathcal{H}$ is maximally monotone,  the identity 
    \begin{equation}
        \operatorname{J}_{\boldsymbol{\operatorname{A}} + \operatorname{N}_{\boldsymbol{C}}}(\boldsymbol{x}) = \operatorname{J}_{\boldsymbol{\operatorname{A}}}(\operatorname{P}_{\boldsymbol{C}}(\boldsymbol{x}))
    \end{equation}
    holds, where $\operatorname{P}_{\boldsymbol{C}}$ is defined in (\ref{eq:definitions}).
\end{lemma}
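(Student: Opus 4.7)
The plan is to unfold the resolvent on the left-hand side and show that it collapses coordinate-by-coordinate to the right-hand side. Let $\boldsymbol{y} = \operatorname{J}_{\boldsymbol{\operatorname{A}} + \operatorname{N}_{\boldsymbol{C}}}(\boldsymbol{x})$, so by Definition \ref{def:Resolvent} the inclusion $\boldsymbol{x} \in \boldsymbol{y} + \boldsymbol{\operatorname{A}}(\boldsymbol{y}) + \operatorname{N}_{\boldsymbol{C}}(\boldsymbol{y})$ must hold. Since $\operatorname{N}_{\boldsymbol{C}}$ is empty-valued off $\boldsymbol{C}$ by (\ref{eq:normal_cone}), this forces $\boldsymbol{y} \in \boldsymbol{C}$, so $\boldsymbol{y} = \operatorname{j}(y)$ for some $y \in \mathcal{H}$, and $\boldsymbol{\operatorname{A}}(\boldsymbol{y}) = \operatorname{j}(\operatorname{A}(y))$ because $\boldsymbol{\operatorname{A}}$ is block-diagonal with all blocks equal to $\operatorname{A}$.

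The residual $\boldsymbol{x} - \operatorname{j}(y) - \operatorname{j}(\operatorname{A}(y))$ must then belong to $\operatorname{N}_{\boldsymbol{C}}(\operatorname{j}(y))$, which by (\ref{eq:normal_cone}) is precisely the condition that its $M$ components sum to the zero element of $\mathcal{H}$. This yields $\sum_{i=1}^{M}(x_i - y - \operatorname{A}(y)) = 0$, i.e., $y + \operatorname{A}(y) = \tfrac{1}{M}\sum_{i=1}^{M} x_i$, whence $y = \operatorname{J}_{\operatorname{A}}\!\left(\tfrac{1}{M}\sum_{i=1}^{M} x_i\right)$.

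Comparing with the right-hand side, $\operatorname{P}_{\boldsymbol{C}}(\boldsymbol{x}) = \operatorname{j}\!\left(\tfrac{1}{M}\sum_{i=1}^{M} x_i\right)$ by (\ref{eq:definitions}), and the diagonal structure of $\boldsymbol{\operatorname{A}}$ makes $\operatorname{J}_{\boldsymbol{\operatorname{A}}}$ act coordinatewise, so $\operatorname{J}_{\boldsymbol{\operatorname{A}}}(\operatorname{P}_{\boldsymbol{C}}(\boldsymbol{x})) = \operatorname{j}\!\left(\operatorname{J}_{\operatorname{A}}(\tfrac{1}{M}\sum_{i=1}^{M} x_i)\right)$, which coincides exactly with $\boldsymbol{y}$, closing the identity.

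The main technical point requiring care is well-posedness of the LHS --- that $\boldsymbol{\operatorname{A}} + \operatorname{N}_{\boldsymbol{C}}$ has a single-valued resolvent with full domain. This is where maximal monotonicity of $\operatorname{A}$ enters: one can either appeal to a sum rule for maximal monotone operators, using that $\boldsymbol{\operatorname{A}}$ inherits maximal monotonicity from $\operatorname{A}$ (componentwise) and that the normal cone of the closed convex subspace $\boldsymbol{C}$ is maximal monotone, or bypass the sum theorem altogether by verifying directly that the candidate $\boldsymbol{y} = \operatorname{J}_{\boldsymbol{\operatorname{A}}}(\operatorname{P}_{\boldsymbol{C}}(\boldsymbol{x}))$ produced from the RHS satisfies the defining inclusion of the LHS, which the computation above does explicitly. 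Either route establishes uniqueness of $\boldsymbol{y}$ and thereby the stated identity.
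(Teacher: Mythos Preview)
Your proof is correct and follows essentially the same approach as the paper's: both unfold the resolvent inclusion, use the normal cone to force $\boldsymbol{y}\in\boldsymbol{C}$, and exploit the zero-sum structure of $\operatorname{N}_{\boldsymbol{C}}$ to recover the averaging. Your version is slightly more streamlined---you sum the coordinate inclusions directly, whereas the paper first argues that $x_i - u_i$ is independent of $i$ and then sums---and you add a useful remark on well-posedness that the paper leaves implicit.
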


\begin{proof}
Writing
\begin{equation}
    \boldsymbol{y} = \operatorname{J}_{\boldsymbol{\operatorname{A}} + \operatorname{N}_{\boldsymbol{C}}}(\boldsymbol{x})
\end{equation}
we have
\begin{align}{\label{eq:lemma_eq1}}
     \boldsymbol{x} &= \boldsymbol{y} + \boldsymbol{\operatorname{A}}(\boldsymbol{y}) + \operatorname{N}_{\boldsymbol{C}}(\boldsymbol{y})\\
    \boldsymbol{y} &= \operatorname{J}_{\boldsymbol{\operatorname{A}}}(\boldsymbol{x} - \operatorname{N}_C(\boldsymbol{y}))\label{eq:lemma_eq2}
\end{align}
Given the definition of the normal cone (\ref{eq:normal_cone}), for this expression to be defined, we must have $\boldsymbol{y}\in\boldsymbol{{C}}$, so $\boldsymbol{y} = (y, \cdots, y)$ for some $y \in \mathcal{H}$. By \cite[part (vi) of Proposition 26.4]{bauschke_convex_2011}, we have $\operatorname{J}_{\boldsymbol{\operatorname{A}}}(\boldsymbol{x}) = \left( \operatorname{J}_{{\operatorname{A}_i}}({x}_i) \right)_{i = \{1,\cdots, M\}}$ and thus (\ref{eq:lemma_eq2}) can be reformulated as

\begin{equation}{\label{eq:lemma_eq3}}
    y_i = \operatorname{J}_{{\operatorname{A}}}(x_i - \operatorname{N}_C(\boldsymbol{y}))
\end{equation}
and since  $\boldsymbol{y}$ is in the consensus set, $y_i = y$ and this implies from (\ref{eq:normal_cone}) that $\operatorname{N}_C(\boldsymbol{y}) = \boldsymbol{u}$ for some $\boldsymbol{u}$, where by definition of the normal cone, $\sum_{i =1}^M u_{i}=0$. This necessitates that
\begin{equation}
    {x}_i - {u}_i = x_j - u_j \quad \forall i,j \in \{1,\dots,M\}
\end{equation}

To find the $\boldsymbol{u}$ that satisfies this condition, we can set
\begin{equation}
    {x}_i - {u}_i = c \quad \forall i \in \{1,\dots,M\}
\end{equation}
where $c$ is a constant signal. Then, by summing for $\forall i \in \{1,\dots,M\}$, we have

\begin{align}
    \sum _{i=1} ^M x_i - \sum _{i=1} ^M u_i &= Mc\\
%\end{equation}
%From (\ref{eq:normal_cone}), we know that $\sum _{i=1} ^M u_i = 0$ so, this boils down to 
%\begin{equation}
    c &= \frac{1}{M} \sum _{i=1} ^M x_i\\
    x_i - u_i &= \frac{1}{M} \sum _{i=1} ^M x_i = \operatorname{j}^{-1} (\operatorname{P}_{\boldsymbol{C}}(\boldsymbol{x}))
\end{align}
where the operator $\operatorname{j}^{-1}$ is added to match the dimensionality.
\begin{equation}
    y_i = y = \operatorname{J}_{{\operatorname{A}}}(x_i - u_i) = \operatorname{J}_{{\operatorname{A}}}(\operatorname{j}^{-1} (\operatorname{P}_{\boldsymbol{C}}(\boldsymbol{x})))
\end{equation}
and thus
\begin{equation}
    \boldsymbol{y} = \operatorname{J}_{\boldsymbol{\operatorname{A}}}(\operatorname{P}_{\boldsymbol{C}}(\boldsymbol{x}))
\end{equation}
and this concludes the proof.
\end{proof}

\begin{theorem}{\label{thm:final}}  
    Let ${\mathcal{D}}$ be a closed, convex and non-empty subset of ${\mathcal{H}}$ with $\boldsymbol{\mathcal{D}} = \bigoplus_{i \in \{1,...,M\}} \mathcal{D}$ and, $\alpha > 0$ and let $\operatorname{E}$, $\operatorname{F}_i$ and $\operatorname{G}_i : \mathcal{H} \xrightarrow{} \mathcal{H}$, satisfy the assumptions:

    \begin{enumerate}
        \item Operator $\operatorname{E}$ is $\rho$-strongly monotone on $\mathcal{D}$
        \item Operator  $\operatorname{F}_i$ is $\gamma$-strongly monotone on $\mathcal{D}$
        \item Operator $\operatorname{G}_i$ satisfies
        \begin{equation}
             (\beta + \frac{1}{\epsilon})\norm{y - \hat{y}}^2 \geq \ip{\operatorname{G}_i y - \operatorname{G}_i \hat{y}}{y-\hat{y}} \geq \beta \norm{y - \hat{y}}^2
        \end{equation}
        for all $y, \hat y \in \mathcal{D}$.
        \item Operator $\boldsymbol{\operatorname{T}}$ defined in (\ref{eq:FPI}) satisfies $ \boldsymbol{\operatorname{T}}(\boldsymbol{\mathcal{D}}) \subseteq \boldsymbol{\mathcal{D}}$
        \item The condition $\gamma > \beta + 1 + \frac{2}{\epsilon}$ holds on the subset ${\mathcal{D}}$ for operator pairs $\operatorname{F}_i$ and $\operatorname{G}_i$
        \item $\text{Fix} \boldsymbol{\operatorname{T}} \neq \emptyset$
    \end{enumerate}
    then the shadow sequence of the fixed-point iteration $\boldsymbol{z}^{k+1} = \boldsymbol{\operatorname{T}}\boldsymbol{z}^k$ converges weakly to a  zero of (\ref{eq:method_app}).
\end{theorem}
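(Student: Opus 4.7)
The plan is to reduce the $(2p{+}1)$-operator zero-finding problem (\ref{eq:method_app}) to the 3-operator setting of Theorem \ref{thm:three} via a lifting into the product space $\boldsymbol{\mathcal{L}}^2$, with the consensus constraint $x_1 = \cdots = x_p$ enforced through the normal cone $\operatorname{N}_{\boldsymbol{C}}$. Working at the effective step $\alpha' = p\alpha$, I would define
\begin{equation*}
    \operatorname{A}^{\mathrm{lift}} = \boldsymbol{\operatorname{F}}, \;\; \operatorname{B}^{\mathrm{lift}} = \tfrac{1}{p}\boldsymbol{\operatorname{E}} + \tfrac{1}{p\alpha}\operatorname{N}_{\boldsymbol{C}}, \;\; \operatorname{C}^{\mathrm{lift}} = \boldsymbol{\operatorname{G}},
\end{equation*}
and consider the lifted inclusion $(\operatorname{A}^{\mathrm{lift}} + \operatorname{B}^{\mathrm{lift}} - \operatorname{C}^{\mathrm{lift}})\boldsymbol{x} \ni \mathbf{0}$. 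Since $\operatorname{N}_{\boldsymbol{C}}$ has empty values off $\boldsymbol{C}$, any solution is of the form $\boldsymbol{x} = \operatorname{j}(x)$; summing the $p$ component inclusions annihilates the normal-cone contribution through (\ref{eq:normal_cone}) and, after division by $p$, recovers exactly (\ref{eq:method_app}).

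Next, I would invoke Lemma \ref{lemma} together with the conic identity $s\operatorname{N}_{\boldsymbol{C}} = \operatorname{N}_{\boldsymbol{C}}$ for $s>0$ to obtain
\begin{equation*}
    \operatorname{J}_{\alpha' \operatorname{B}^{\mathrm{lift}}} = \operatorname{J}_{\alpha \boldsymbol{\operatorname{E}} + \operatorname{N}_{\boldsymbol{C}}} = \operatorname{J}_{\alpha \boldsymbol{\operatorname{E}}}\circ \operatorname{P}_{\boldsymbol{C}},
\end{equation*}
and substitute this identity into the 3-operator Douglas--Rachford template (\ref{eq:def_averaged}) with step $\alpha'$. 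A direct comparison of terms recovers exactly the operator $\boldsymbol{\operatorname{T}}$ of (\ref{eq:FPI}), so the proposed iteration is precisely the Theorem \ref{thm:three} scheme applied to the lifted ZFP.

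The remaining work is to verify the hypotheses of Theorem \ref{thm:three} in the product space and pass convergence back to the original space. Strong monotonicity and the sandwich condition transfer coordinatewise: $\boldsymbol{\operatorname{F}}$ is $\gamma$-strongly monotone, $\boldsymbol{\operatorname{G}}$ satisfies the sandwich with constants $(\beta, \beta + 1/\epsilon)$, and $\operatorname{B}^{\mathrm{lift}}$ inherits $\rho/p$-strong monotonicity on $\boldsymbol{\mathcal{D}}\cap \boldsymbol{C}$ because $\operatorname{N}_{\boldsymbol{C}}$ is monotone. Hypothesis (4) of Theorem \ref{thm:three} is precisely the standing invariance $\boldsymbol{\operatorname{T}}(\boldsymbol{\mathcal{D}})\subseteq\boldsymbol{\mathcal{D}}$, and the step bound $\alpha'\le 2/\beta^2$ translates to a restriction on $\alpha$. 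Theorem \ref{thm:three} then yields $\boldsymbol{z}^k \rightharpoonup \boldsymbol{z}^* \in \operatorname{Fix}\boldsymbol{\operatorname{T}}$. Since $\operatorname{P}_{\boldsymbol{C}}$ is linear and bounded, and the single-valued resolvent $\operatorname{J}_{\alpha\operatorname{E}}$ is nonexpansive by strong monotonicity of $\operatorname{E}$, the shadow sequence $x^{k+1} = \operatorname{J}_{\alpha\operatorname{E}}(\bar z^k)$ converges weakly to $x^* = \operatorname{J}_{\alpha\operatorname{E}}(\bar z^*)$, which is the common coordinate of the lifted fixed point and hence, by the first paragraph, solves (\ref{eq:method_app}).

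The principal obstacle I anticipate is reconciling assumption (5) across the lift. In Theorem \ref{thm:three}, the inequality $\gamma > \beta + 1 + 2/\epsilon$ constrains the strong-monotonicity constant of the \emph{first}-resolved operator $\operatorname{B}^{\mathrm{lift}}$ (here $\rho/p$) against the sandwich constants of $\operatorname{C}^{\mathrm{lift}}$, whereas Theorem \ref{thm:final} phrases the analogous condition in terms of the constant $\gamma$ of $\boldsymbol{\operatorname{F}}$. Bridging these requires revisiting the key descent inequality (\ref{eq:losslessness}) directly in the product space: the cross-terms generated by $\operatorname{P}_{\boldsymbol{C}}$ must redistribute the $\gamma$-contribution from the outer $\boldsymbol{\operatorname{F}}$-resolvent onto the half-iterate, so that, after coordinatewise summation against the $(\operatorname{Id}-\operatorname{P}_{\boldsymbol{C}})$ and $\operatorname{P}_{\boldsymbol{C}}$ components of $\boldsymbol{z}^k - \hat{\boldsymbol{z}}^k$, the effective condition collapses onto the hypothesised inequality on the pairs $(\operatorname{F}_i, \operatorname{G}_i)$.
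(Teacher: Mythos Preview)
Your plan is essentially the paper's own proof: lift to $\boldsymbol{\mathcal{L}}^2$, encode consensus through $\operatorname{N}_{\boldsymbol{C}}$, invoke Lemma~\ref{lemma} to obtain $\operatorname{J}_{\alpha\boldsymbol{\operatorname{E}}+\operatorname{N}_{\boldsymbol{C}}}=\operatorname{J}_{\alpha\boldsymbol{\operatorname{E}}}\circ\operatorname{P}_{\boldsymbol{C}}$, and apply Theorem~\ref{thm:three} to the resulting three-operator inclusion $\boldsymbol{\operatorname{S}}+\boldsymbol{\operatorname{F}}-\boldsymbol{\operatorname{G}}$; your scaling (dividing $\boldsymbol{\operatorname{E}}$ by $p$ and working at step $p\alpha$) and the paper's (absorbing a factor $M$ into $\boldsymbol{\operatorname{F}},\boldsymbol{\operatorname{G}}$ at step $\alpha$) are algebraically the same manoeuvre. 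Regarding the obstacle you isolate in your final paragraph, the paper's proof does not resolve it either---it simply writes out the lifted iteration from (\ref{eq:algorithm}) and declares Theorem~\ref{thm:three} applicable without reconciling which operator's modulus the inequality $\gamma>\beta+1+2/\epsilon$ actually constrains after the lift (in Theorem~\ref{thm:three} it binds the first-resolved operator $\operatorname{B}$, here $\boldsymbol{\operatorname{S}}=\boldsymbol{\operatorname{E}}+\operatorname{N}_{\boldsymbol{C}}$, not $\boldsymbol{\operatorname{F}}$), so your caution on this point is, if anything, more rigorous than the published argument.
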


\begin{proof}
 The proof begins with multiplying both sides of (\ref{eq:method_app}) by $M$ so

\begin{equation}{\label{eq:multiplied}}
    \left( M\operatorname{E} + M\sum_{i = 1}^M (\operatorname{F}_i - \operatorname{G}_i) \right)x=0
\end{equation}

Using the definitions of (\ref{eq:definitions}) and \cite[part (vii) of Proposition 26.4]{bauschke_convex_2011}, the representation of (\ref{eq:multiplied}) can be lifted to the product Hilbert space as
\begin{equation}
    \operatorname{j} \left( \operatorname{zer}(M \operatorname{E} +M \sum_{i = 1}^M (\operatorname{F}_i - \operatorname{G}_i)) \right) = \operatorname{zer} \left( \boldsymbol{\operatorname{E}} + \boldsymbol{\operatorname{F}} - \boldsymbol{\operatorname{G}} + \operatorname{N}_{\boldsymbol{C}} \right)
\end{equation}
where
\begin{equation}{\label{eq:big_operator}}
    \boldsymbol{\operatorname{E}} =  \begin{bmatrix}
\operatorname{E} \\
\operatorname{E} \\
\vdots  \\
\operatorname{E}
\end{bmatrix}, \quad \boldsymbol{\operatorname{F}} =  \begin{bmatrix}
M \operatorname{F}_1 \\
M \operatorname{F}_2 \\
\vdots  \\
M \operatorname{F}_M \\
\end{bmatrix}, \quad \boldsymbol{\operatorname{G}} =  \begin{bmatrix}
M \operatorname{G}_1 \\
M \operatorname{G}_2 \\
\vdots  \\
M \operatorname{G}_M \\
\end{bmatrix}
\end{equation}
are the new operators in the lifted product Hilbert space. 

It is convenient to define a new operator $\boldsymbol{\operatorname{S}} = \boldsymbol{\operatorname{E}} + \operatorname{N}_{\boldsymbol{C}}$. The problem at hand thus reduces to finding

\begin{equation}{\label{eq:equivalent}}
    \operatorname{zer}(\boldsymbol{\operatorname{S}} + \boldsymbol{\operatorname{F}} - \boldsymbol{\operatorname{G}})
\end{equation}

By using Theorem \ref{thm:three}, it is possible to derive the FPI algorithm that solves this problem. Maximal monotonicity of the operators and \cite[part (vi) of Proposition 26.4]{bauschke_convex_2011} permits the treatment of this problem as $M$ parallel ZFP in the product space of $\boldsymbol{\mathcal{H}}$. By using (\ref{eq:algorithm}), the first step is

\begin{equation}{\label{eq:first_step}}
    \begin{aligned}
    & \boldsymbol{x}^{k+1/2} = \operatorname{J}_{\alpha \boldsymbol{\operatorname{S}}}(\boldsymbol{z}^k) = \operatorname{J}_{\alpha (\boldsymbol{\operatorname{E}} + \operatorname{N}_{\boldsymbol{C}})}(\boldsymbol{z}^k)
    \end{aligned}
\end{equation}
where the special structure of $\boldsymbol{\operatorname{E}}$ as defined in (\ref{eq:big_operator}) and the properties of $\operatorname{N}_{\boldsymbol{C}}$ permit easy calculation of $\operatorname{J}_{\alpha (\boldsymbol{\operatorname{E}} + \operatorname{N}_{\boldsymbol{C}})}$.

Using the statement of Lemma \ref{lemma}, (\ref{eq:first_step}) reduces to 

\begin{equation}{\label{eq:rand}}
    \begin{aligned}
    & x_i^{k+1/2} =
    \operatorname{J}_{\alpha \operatorname{E}_i}(\operatorname{j}^{-1}(\operatorname{P}_{\boldsymbol{C}}(\boldsymbol{z}^k)))
    \end{aligned}
\end{equation}
and since by definition $\operatorname{E}_i = \operatorname{E}$, the subscript in (\ref{eq:rand}) can be dropped.

To proceed with the second step of (\ref{eq:algorithm}), the problem utilizes the parallel $M$ Hilbert spaces. Thus, the second step boils down to

\begin{equation}
    x^{k+1}_i=\operatorname{J}_{\alpha M \operatorname{F}_i}\left(2 x^{k+1 / 2}_i-z^{k}_i+\alpha M \operatorname{G}_i x^{k+1 / 2}_i\right)
\end{equation}
and the last step remain the same as
\begin{equation}
    z^{k+1} _i = z^{k} _i - x^{k+1/2} + x_i^{k+1}
\end{equation}

This proves the convergence of the FPI algorithm of Theorem \ref{thm:main}.
\end{proof}

\begin{remark}
    Both in theorem \ref{thm:three} and \ref{thm:final}, the property of strong monotonicity has been assumed for the first operator (e.g. $\operatorname{A}$ for the case of three operators and $\boldsymbol{\operatorname{E}}$ in the lifted case). Nevertheless, as shown in (\ref{eq:losslessness}), this requirement can be relaxed to losslessness without impacting the convergence of the problem.
\end{remark}

\bibliographystyle{IEEEtran}
\bibliography{ref}

\begin{IEEEbiography}[{\includegraphics[width=1in,height=1.25in,clip,keepaspectratio]{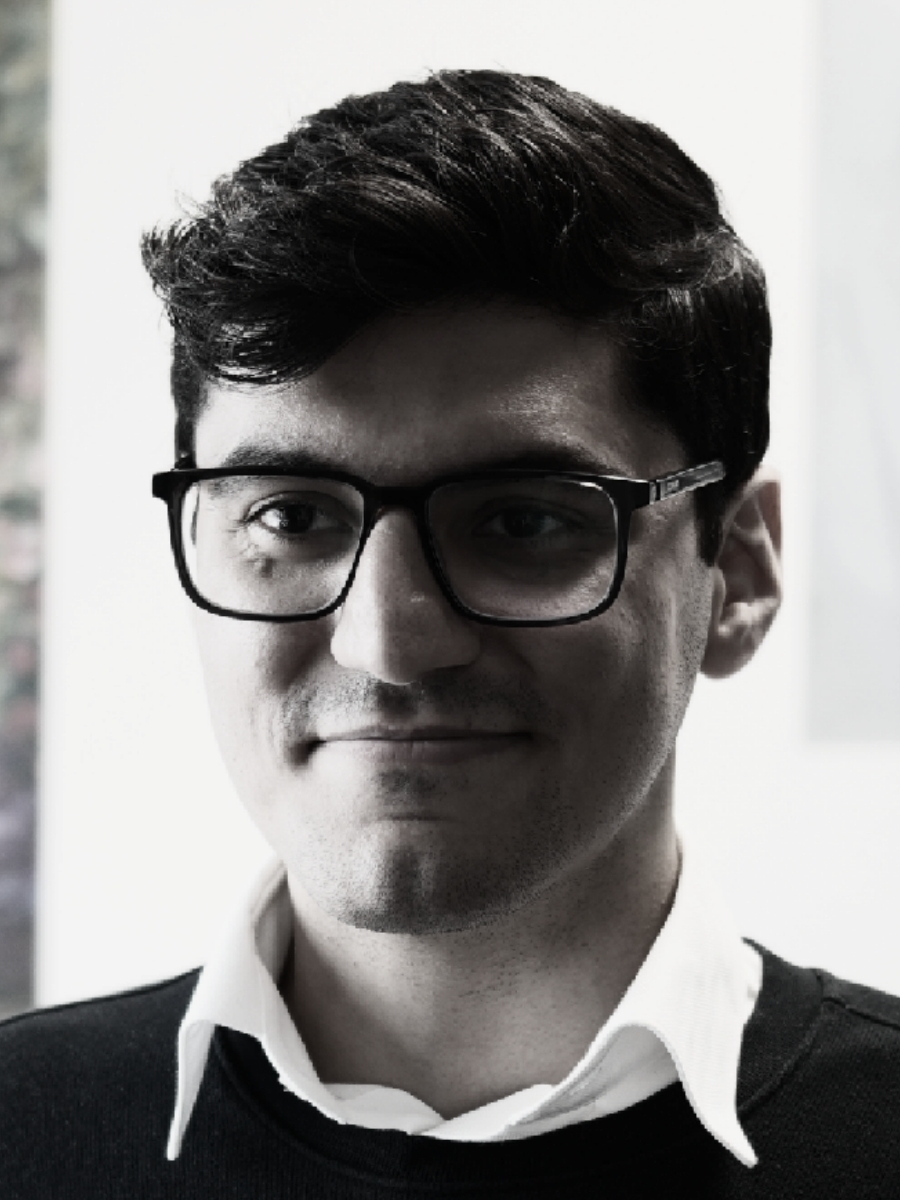}}]{Amir Shahhosseini} (Graduate Student Member, IEEE) received his M.Sc. degree in mechanical engineering from Sharif University of Technology in 2022. He is currently pursuing his Ph.D. at KU Leuven under the supervision of Prof. Rodolphe Sepulchre.  

    His research interests are system theory, optimization, and spiking neural networks. He is the recipient of the Harding distinguished scholarship, Cambridge, UK.
\end{IEEEbiography}

\begin{IEEEbiography}[{\includegraphics[width=1in,height=1.25in,clip,keepaspectratio]{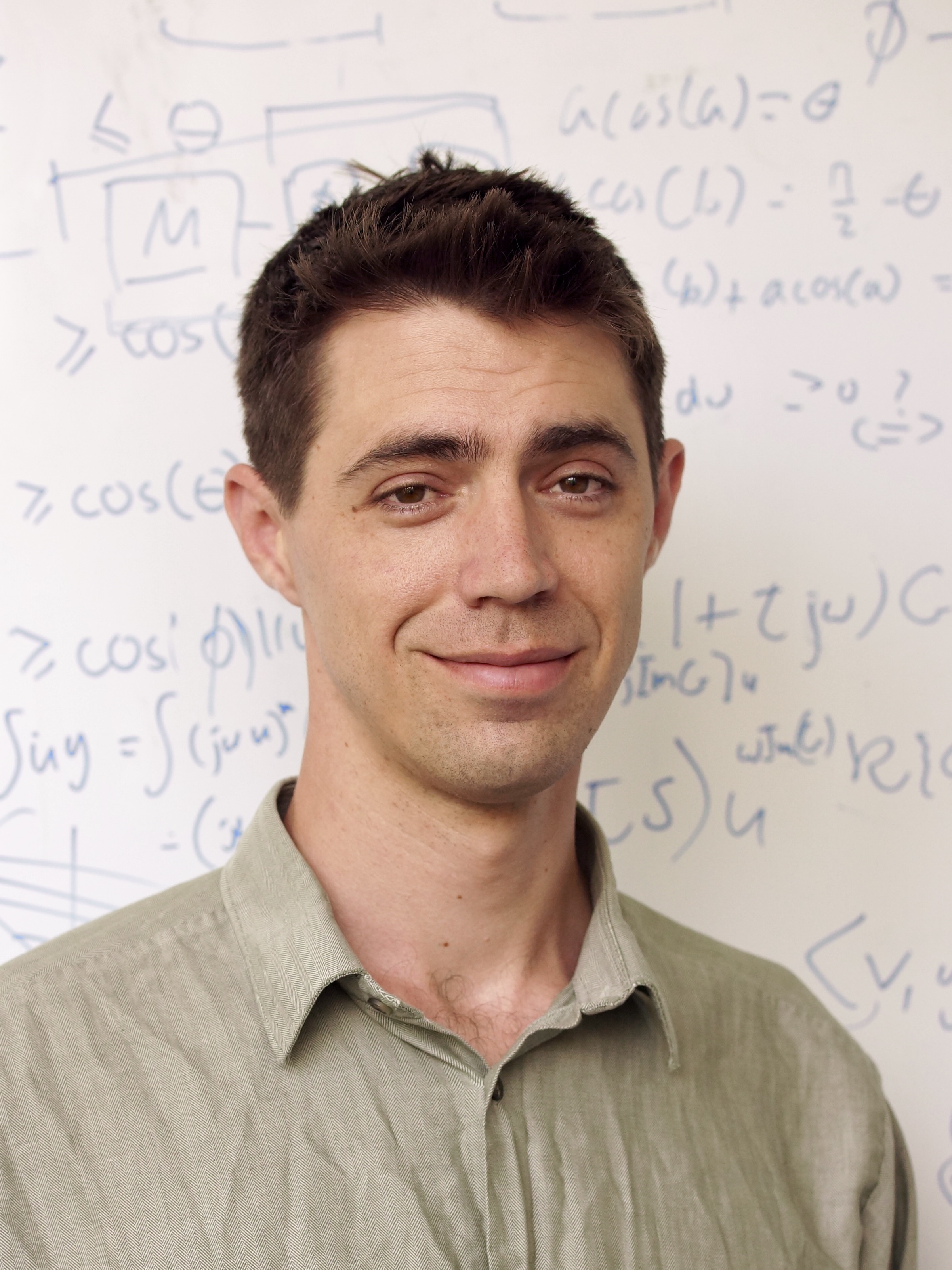}}]
{Thomas Chaffey} is a lecturer in the School of Electrical and Computer Engineering at the University of Sydney, Australia. He received the B.Sc. (advmath) degree in mathematics and computer science and the M.P.E. degree in mechanical engineering from the University of Sydney in 2015 and 2018, respectively, and the Ph.D. degree from the University of Cambridge, U.K., in 2022.  From 2022 to 2025 he held the Maudslay-Butler Research Fellowship in Engineering at Pembroke College, University of Cambridge. His research interests are in nonlinear control and its intersection with optimization, circuit theory and learning.
\end{IEEEbiography}

\begin{IEEEbiography}[{\includegraphics[width=1in,height=1.25in,clip,keepaspectratio]{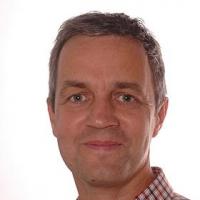}}]{Rodolphe Sepulchre} (Fellow, IEEE)  is professor of engineering at the KU Leuven (Belgium) and at the University of  Cambridge (UK). He is a fellow of IFAC (2020), IEEE (2009), and SIAM (2015).  He received the IEEE CSS Antonio Ruberti Young Researcher Prize in 2008 and the IEEE CSS George S. Axelby Outstanding Paper Award in 2020. He was elected at the Royal Academy of Belgium in 2013. He co-authored the monographs Constructive Nonlinear Control (1997, with M. Jankovic and P. Kokotovic) and Optimization on Matrix Manifolds (2008, with P.-A. Absil and R. Mahony).  His current research interests are focused in nonlinear control, with a focus on the feedback control principles of neuronal circuits. His research is currently funded by the ERC advanced grant SpikyControl (2023-2028).
\end{IEEEbiography}

\end{document}